\relax

\documentclass[letterpaper]{article} 
\usepackage{aaai21}  
\usepackage{times}  
\usepackage{helvet} 
\usepackage{courier}  
\usepackage[hyphens]{url}  
\usepackage{graphicx} 
\urlstyle{rm} 
\usepackage{natbib}  
\usepackage{caption} 
\frenchspacing  
\setlength{\pdfpagewidth}{8.5in}  
\setlength{\pdfpageheight}{11in}  

\usepackage[switch]{lineno}

\setcounter{secnumdepth}{2}

\PassOptionsToPackage{numbers}{natbib}
\usepackage{booktabs} 
\usepackage[ruled]{algorithm2e} 

\SetAlFnt{\small}
\SetAlCapFnt{\small}
\SetAlCapNameFnt{\small}
\SetAlCapHSkip{0pt}
\IncMargin{-\parindent}

\usepackage{graphicx}
\usepackage{amsmath,amssymb,amsthm}
\usepackage{algorithmic}
\usepackage{bm}
\usepackage{rotating}
\usepackage{multicol}
\usepackage{latexsym}
\usepackage{color}
\usepackage{mathtools}


\newcounter{newct}

\newcommand{\bv}{\begin{array}}


\newcommand{\E}{\mathbb{E}}

\newcommand{\appClaim}[2]{\noindent{\bf Claim~\ref{#1}.} {\em #2}}

\newcommand{\appThm}[3]{\noindent{\bf Theorem~\ref{#1} (#2).} {\em #3}}


\newtheorem{thm}{Theorem}
\newtheorem{dfn}{Definition}

\newtheorem{ex}{Example}

\newtheorem{prop}{Proposition}
\newtheorem{coro}{Corrollary}
\newenvironment{sketch}{\noindent{\em Proof sketch.}\rm }{\hfill $\Box$ }

\newtheorem{asmpt}{Assumption}
\newtheorem{claim}{Claim}

\newcommand{\ma}{\mathcal A}
\newcommand{\mm}{\mathcal M}

\newcommand{\ml}{\mathcal L}
\newcommand{\mV}{\mathcal V}

\newcommand{\ra}{\rightarrow}


\newcommand{\kt}{\text{KT}}
\newcommand{\avgkt}{\overline{\text{KT}}}

\newcommand{\mallows}{\text{Ma}}
\newcommand{\pl}{\text{Pl}}

\newcommand{\op}{\text{O}}

\newcommand{\expect}{{\mathbb E}}

\newcommand{\Omit}[1]{}

\newcommand{\md}{\mathcal D}

\newcommand{\gcyc}{{G_{3c}}}
\newcommand{\gco}{{G_{co}}}
\newcommand{\picyc}{{\pi_{3c}}}
\newcommand{\pico}{{\pi_{co}}}

\newcommand{\wmg}{\text{WMG}}
\newcommand{\ks}{{\sc Kemeny Score}}

\newcommand{\slaterranking}{{\sc Slater Ranking}}
\newcommand{\kr}{{\sc Kemeny Ranking}}
\newcommand{\brute}{\text{BF}}
\newcommand{\efas}{\text{\sc EFAS}}
\newcommand{\tfas}{\text{\sc TFAS}}

\newcommand{\spclass}{\text{\sc Smoothed-P}}
\newcommand{\snpclass}{\text{\sc Dist-NP}$_\text{para}$}

\newcommand{\umg}{\text{UMG}}

\newcommand{\alg}{\text{Alg}}
\newcommand{\rt}[1]{\text{Time}_{#1}}
\newcommand{\sr}[2]{\tilde{\text{RT}}_{#1}(#2)}

\title{The Smoothed Complexity of Computing Kemeny and Slater Rankings}
\author{Lirong Xia\textsuperscript{\rm 1} and  Weiqiang Zheng\textsuperscript{\rm 2}\\}
\affiliations {
\textsuperscript{\rm 1} RPI, \textsuperscript{\rm 2} Peking University \\
xialirong@gmail.com, weiqiang.zheng@pku.edu.cn
}

\begin{document}
\maketitle
\begin{abstract}
The computational complexity of {\em winner determination} under common voting rules is a classical and fundamental topic in the field of computational social choice. Previous work has established the {\sc NP}-hardness of winner determination under some commonly-studied voting rules, especially the Kemeny rule and the Slater rule. In a recent blue-sky paper, \citet{Baumeister2020:Towards} questioned the relevance of  the worst-case nature of {\sc NP}-hardness in social choice and proposed to conduct smoothed complexity analysis~\cite{Spielman2009:Smoothed} under~\citet{Blaser2015:Smoothed}'s framework.

In this paper, we develop the first smoothed complexity results for winner determination in voting. We illustrate the inappropriateness of~\citet{Blaser2015:Smoothed}'s smoothed complexity framework  in social choice contexts by proving a paradoxical result, which states that the exponential-time brute force search algorithm is smoothed poly-time according to their definition. We then prove the smoothed hardness of Kemeny and Slater using the classical smoothed complexity analysis, and prove a parameterized typical-case smoothed easiness result for Kemeny. 
Overall, our results show that smoothed complexity analysis in computational social choice is a challenging and fruitful topic.
\end{abstract}

\section{Introduction}

The computational complexity of {\em winner determination} under common voting rules is a classical and important topic in the field of computational social choice~\cite[Section 1.2.3]{Brandt2016:Handbook}. A low computational complexity of winner determination is desirable and is indeed the case for many commonly-studied and widely-applied voting rules. On the other hand, winner determination has been proved to be NP-hard for some classical voting rules such as the Kemeny rule and the Slater rule~\cite{Bartholdi89:Voting,Conitzer06:Slater}.

To address the worst-case nature of NP-hardness, average-case analysis has been conducted to provide a more realistic analysis of algorithms. However, average-case analysis is sensitive to the choice of the distribution over input instances, which may itself be unrealistic. To tackle this problem, \citet{Spielman2004:Smoothed} introduced {\em smoothed complexity analysis} to generalize and combine the worst-case analysis and the average-case analysis. The idea is that the  input $\vec x$ of an algorithm $\alg$ is often a noisy perception of the ground truth  input $\vec x^*$. Consequently, the worst-case is analyzed by assuming that an adversary chooses a ground truth $\vec x^*$ and then Nature adds a noise $\vec\epsilon$ (e.g.~a Gaussian noise) to it, such that the algorithm's input becomes $\vec x=\vec x^*+\vec \epsilon$. The smoothed runtime of $\alg$ is therefore defined as $\max_{\vec x^*}\expect_{\vec\epsilon}\  \rt{\alg}(\vec x^*+\vec\epsilon)$, in contrast to the worst-case runtime $\max_{\vec x^*}\rt{\alg}(\vec x^*)$ and the average-case runtime $\expect_{\vec x^*\sim\pi} \rt{\alg}(\vec x^*)$ under a given distribution $\pi$ over inputs.

Smoothed complexity analysis has been applied to a wide range of problems in mathematical programming, machine learning, numerical analysis, discrete math, combinatorial optimization, and equilibrium analysis  and price of anarchy, see the survey by~\citet{Spielman2009:Smoothed}.  In a recent blue-sky paper, \citet{Baumeister2020:Towards} proposed to conduct smoothed complexity analysis in computational social choice under~\citet{Blaser2015:Smoothed}'s framework and proposed a natural noise model that leverages the celebrated~\citet{Mallows57:Non-null} model. However, we are not aware of a technical result on the smoothed complexity of winner determination in voting. The following question remains open.

\begin{center}
{\em  What is the smoothed complexity of winner determination under commonly-studied voting rules?}
\end{center}

\noindent{\bf Our Model.}
We answer the question for the Kemeny rule and the Slater rule, for which winner determination means computing an optimal consensus ranking. We adopt the {\em smoothed social choice framework}~\cite{Xia2020:The-Smoothed}, which covers a wide range of models, including the Mallows-based model proposed by~\citet{Baumeister2020:Towards}. In the  framework, for any  number of alternatives and any number of agents, denoted by $m\ge 3$ and $n\ge 1$ respectively, the adversary chooses a distribution $\pi_j$  for each agent $j$ from a set of distributions $\Pi_m$ over all rankings. Let $\vec \pi=(\pi_1,\ldots,\pi_n)\in \Pi_m^n$. Then, given an algorithm $\alg$, the adversary aims at choosing $\vec\pi$ to maximize the expected runtime of $\alg$ on the profile where each ranking is generated independently from $\vec \pi$, defined as follows.
\begin{equation}
\label{eq:classicalsmoothed}
\sr{\Pi_m}{\alg, m, n}=\sup_{\vec \pi\in \Pi_m^n}\expect_{P\sim \vec\pi}\  \rt{\alg}(P)
\end{equation}
Because the size of input of the algorithm is the size of $P$, i.e.~$\Theta(nm\log m)$,  we desire $\sr{\Pi_m}{\alg,m,n}$ to be polynomial in $m$ and $n$. Following the convention in statistics and the notation in~\cite{Xia2020:The-Smoothed}, for any $m\ge 3$, we use a {\em single-agent preference model} $\mm_m=(\Theta_m,\ml(\ma_m),\Pi_m)$ to model the adversary's capability, where $\Theta_m$ is the parameter space and $\ml(\ma_m)$ is the set of all rankings over $m$ alternatives.  

Note that when $m$ is a constant, many commonly-studied voting rules, including Kemeny and Slater, are easy to compute. Therefore, to meaningfully analyze the smoothed complexity, in this paper we are given an infinite series of models $\vec\mm=\{\mm_m=(\Theta_m,\ml(\ma_m),\Pi_m):m\ge 3\}$, following the convention in average-case complexity theory~\cite{Bogdanov2006:Average-Case} and the smoothed complexity theory proposed by~\citet{Blaser2015:Smoothed}. 


\vspace{2mm}\noindent{\bf Our Contributions.} We first prove a paradoxical result (Proposition~\ref{prop:brute}) under \citet{Blaser2015:Smoothed}'s framework to show its inappropriateness. The proposition states that the exponential-time brute force search algorithm for Kemeny and Slater is smoothed poly-time w.r.t.~a large class of models including the model proposed by~\citet{Baumeister2020:Towards}.  

Consequently, we  conduct the classical smoothed complexity analysis~\cite{Spielman2009:Smoothed} and prove two smoothed hardness results for Kemeny (Theorem~\ref{thm:kemenyhard}) and Slater  (Theorem~\ref{thm:slaterhard}), respectively. Both theorems state that for a large class of models, if a smoothed poly-time algorithm exists, then RP$=$NP. 

Finally, we consider parameterized typical-case smoothed complexity and prove a mildly positive result in Theorem~\ref{thm:kemenyeasy}, which implies that for a large class of Mallows-based models, if the average Kendall Tau's distance in the central rankings and the average dispersion parameters are not too large, then the dynamic programming algorithm for Kemeny proposed by~\citet{Betzler09:Fixed} runs in poly-time with high probability.

\vspace{2mm}\noindent{\bf Related Work and Discussions.}  We are not aware of a previous technical result about the smoothed complexity of social choice problems. As discussed above, \citet{Baumeister2020:Towards} proposed to conduct smoothed complexity analysis in computational social choice and proposed a natural Mallows-based model for such analysis. \citet{Xia2020:The-Smoothed} showed that some paradoxes and impossibility theorems in social choice vanishes in the smoothed sense. 

Our paper aims at making the first technical attempt of smoothed complexity analysis in computational social choice, and overall our results show that  the topic is highly challenging and fruitful. The seemingly paradoxical smoothed efficiency of brute force search under~\citet{Blaser2015:Smoothed}'s  framework  (Proposition~\ref{prop:brute}) does not mean that their framework is ``wrong'', but instead, should be interpreted as a call  for future theories of smoothed complexity analysis in social choice contexts. The smoothed hardness of Kemeny (Theorem~\ref{thm:kemenyhard}) and Slater (Theorem~\ref{thm:slaterhard})
are negative news and the parameterized typical-case smoothed efficiency (Theorem~\ref{thm:kemenyeasy}) is positive news. Proof techniques developed for these theorems may be useful in future work.

There is a large body of literature on the computational complexity of Kemeny. The corresponding optimization problem, \kr{}, was proved to be {\sc NP}-hard~\cite{Bartholdi89:Voting} and P$_\|^\text{\sc NP}$-complete~\cite{Hemaspaandra05:Complexity}. 
Approximation algorithms~\cite{Ailon2008:Aggregating,Zuylen2007:Deterministic}, PTAS~\cite{Kenyon07:How}, and fixed-parameter efficient algorithms~\cite{Betzler09:Fixed,Karpinski2010:Faster,Cornaz2013:Kemeny} for \kr{} have been developed. Practical algorithms have been proposed~\cite{Davenport04:Computational,Conitzer06:Kemeny} and \citet{Ali12:Experiments} compared the performance of 104 algorithms.  \citet{Conitzer06:Slater} proved that the decision variant of Slater is {\sc NP}-hard and proposed an efficient heuristic algorithm for computing \slaterranking{}.

There is a large body of literature on smoothed complexity of algorithms~\cite{Spielman2009:Smoothed}. \citet{Blaser2015:Smoothed} established a complexity theory for smoothed complexity analysis by defining the counterparts to P and NP, called \spclass{} and \snpclass{} respectively, together with a smoothed reduction and complete problems. Their definitions are closely related to the average-case complexity theory established by~\citet{Levin1986:Average}. However, as we will show in Section~\ref{sec:brute}, it may not be suitable for computational social choice. Our Theorem~\ref{thm:kemenyhard} and~\ref{thm:slaterhard} illustrate the hardness of \kr{} and \slaterranking{} using the same pattern in~\cite{Huang2007:On-the-Approximation}, which states that the existence of a smoothed poly-time algorithm would lead to a surprise in complexity theory.



\section{Preliminaries}

\noindent{\bf Basics of Voting.} For any $m\ge 3$, let $\ma_m=\{a_1,\ldots,a_m\}$ denote the set of $m$ alternatives.  A {\em (preference) profile} $P\in \ml(\ma_m)^n$ is a collection of $n$ rankings (linear orders). Let $\wmg(P)$ denote the {\em weighted majority graph} of $P$, which is a directed weighted graph whose vertices are $\ma_m$ and for each pair of alternatives $a, b$, the weight on edge $a\ra b$, denoted by  $w_P(a,b)$, is the winning margin of $a$ over $b$ in their pairwise competition. That is, $w_P(a,b) = -w_P(b,a)=\#\{R\in P: a\succ_R b\}-\#\{R\in P: b\succ_R a\}$.  Let $\umg(P)$ denote the {\em unweighted majority graph} of $P$, which is the unweighted directed graph obtained from $\wmg(P)$ by removing edges whose weights are $\le 0$. 

The {\em Kendall's Tau} distance between two linear orders $R,W\in\ml(\ma)$, denoted by $\kt(R,W)$,  is the number of pairwise disagreements between $R$ and $W$. Given a profile $P$ and a linear order $R$, the {\em Kemeny score} of $R$ in $P$ is $\sum_{W\in P}\kt(R,W)$ and the {\em Slater score}  of $R$ in $P$ is $\kt(R,\umg(P))$, where $\kt$ is extended to measure the distance between a linear order and an unweighted graph in the natural way. The Kemeny rule (respectively, the Slater rule) aims at selecting the linear order with the minimum Kemeny score (respectively, Slater score) in $P$. The corresponding winner determination problems are defined as follows.

\begin{dfn}[\bf \kr{} and \slaterranking{}] Given $m\ge 3$, $n\in\mathbb N$, and $P\in \ml(\ma_m)^n$, in \kr{} (respectively, \slaterranking{}), we are asked to compute a ranking with minimum Kemeny score (respectively, Slater score).
\end{dfn}

\begin{dfn}[\bf Single-agent preference model~\cite{Xia2020:The-Smoothed}]A {\em single-agent preference model} for $m$ alternatives is denoted by $\mm_m=(\Theta_m,\ml(\ma_m),\Pi_m)$, where $\Pi_m$ is the set of distributions over $\ml(\ma_m)$ indexed by the parameter space $\Theta_m$.
 $\mm_m$ is {\em neutral} if for any $\theta\in\Theta_m$ and any permutation $\sigma$ over $\ma_m$, there exists $\theta'\in\Theta_m$ such that for all $V\in\ml(\ma_m)$, we have $\pi_\theta(V) = \pi_{\theta'}(\sigma(V))$.  $\mm_m$ is {\em P-samplable} if there exists a poly-time sampling algorithm for each distribution in $\Pi_m$.
\end{dfn}
Technically $\mm_m$ is completely determined by $\Pi_m$. Following the convention in statistics, we sill keep the parameter space $\Theta_m$ and sample space $\ml(\ma_m)$ in the definition.

\begin{ex}
\label{ex:singlemallows}
In a  {\em single-agent Mallows model}  $\mm_{\mallows,m}$, $\Theta_m = \ml(\ma_m)\times (0,1]$, where in each $(R,\varphi)\in \Theta_m$, $R$ is called the {\em central ranking} and $\varphi$ is called the {\em dispersion parameter}. For any $W\in \ml(\ma_m)$, we have $\pi_{(R,\varphi)} = \varphi^{\kt(R,W)}/Z_{\varphi}$, where $Z_\varphi = \frac{\prod_{i=2}^m(1-\varphi^i)}{(1-\varphi)^{m-1}}$ 
is the normalization constant. For any $0< \underline\varphi\le \overline\varphi\le 1$, we let $\mm_{\mallows,m}^{[\underline\varphi, \overline\varphi]}$ denote the sub-model whose parameter space is  $\ml(\ma_m)\times [\underline\varphi, \overline\varphi]$. $\mm_{\mallows,m}^{[\underline\varphi, \overline\varphi]}$ is  neutral and P-samplable~\cite{Doignon2004:Repeated}.

See Appendix~\ref{app:pl} for another class of neutral and P-samplable models based on the Plackett-Luce model. 
\end{ex}

When there are $n\ge 2$ agents, the adversary chooses $\vec\pi=(\pi_1,\ldots,\pi_n)\in \Pi_m^n$, and then agent $j$'s ranking will be independently (but not necessarily identically) generated from $\pi_j$.  

\begin{ex}\label{ex:models}
Suppose $m=3$ and $n=2$, and the model is $\mm_{\mallows,3}^{[0.3,1]}$. Then, the adversary can set the first (respectively, second) agent's distribution to be the Mallows distribution given ground truth $(a_1\succ a_2\succ a_3,0.4)$ (respectively, $( a_3\succ a_2\succ a_1,0.8)$). Then, the probability of generating profile $( a_2\succ a_1\succ  a_3,  a_1\succ a_3\succ  a_2)$ is $\Pr( a_2\succ a_1\succ  a_3|( a_1\succ a_2\succ a_3,0.4))\times \Pr( a_1\succ a_3\succ  a_2|( a_3\succ a_2\succ a_1,0.8)) = \frac{0.4}{Z_{0.4}}\times  \frac{0.8^2}{Z_{0.8}}$.

As another example, the Mallows-based model proposed by~\citet{Baumeister2020:Towards} corresponds to the single-agent Mallows model with fixed  $\varphi$, i.e.~$\mm_{\mallows,m}^{[\varphi,\varphi]}$.
\end{ex}


Because \ks{} is in P when $m$ is bounded above by a constant, the smoothed complexity analysis ought to be done for variable $m$. Therefore, following~\cite{Blaser2015:Smoothed}, we are given a series of single-agent preference models $\vec\mm=\{\mm_m=(\Theta_m,\ml(\ma_m),\Pi_m):m\ge 3\}$.  For example, for any $0<\underline\varphi \le \overline\varphi\le 1$, we let 
$\vec\mm_\mallows^{[\underline\varphi,\overline\varphi]}=\{\mm_{\mallows,m}^{[\underline\varphi,\overline\varphi]}:m\ge 3\}$ denote a series of Mallows-base models.

In most part of this paper (except Section~\ref{sec:brute}), we focus on the classical smoothed poly-time algorithms proposed by~\citet{Spielman2009:Smoothed} w.r.t.~$\vec\mm$. 

\begin{dfn}[\bf Smoothed poly-time]\label{dfn:spt} Given a series of single-agent preference models $\vec\mm =\{\mm_m=(\Theta_m,\ml(\ma_m),\Pi_m):m\ge 3\}$,  an algorithm $\alg$ is smoothed poly-time, if for any $m\ge 3$ and $n\ge 1$, $\sr{\Pi_m}{\alg, m, n}$ as defined in~(\ref{eq:classicalsmoothed}) is polynomial in $m$ and $n$. \end{dfn}

\noindent{\bf \citet{Blaser2015:Smoothed}'s framework.} In the framework, the smoothed runtime of an algorithm is analyzed w.r.t.~a set of distributions $\md=\{D_{\ell,x,\phi}\}$  over the input. For each distribution $D_{\ell,x,\phi}$, $x$ and $\phi$ are the parameters of the distribution, which can be viewed as the counterparts to the ground truth input and inverse variance in the Gaussian noise studied by~\citet{Spielman2009:Smoothed}. $\ell$ is the size of $x$, which is redundant but is included for clarity. Let $S_{\ell,x}$ denote the set of inputs, each of which receives a positive probability under $D_{\ell,x,\phi}$ for some $\phi$. Let $N_{\ell,x}=|S_{\ell,x}|$. \citet{Blaser2015:Smoothed}  defined the following notion of smoothed poly-time algorithms by further requiring that (1) the size of each element in $S_{\ell,x}$ is bounded above by a polynomial of $\ell$, (2) for each $y$, $D_{\ell,x,\phi}(y)<\phi$, and (3) $\phi$ is discretized and can be described by $\text{poly}(\ell)$ bits. Any $\md$ satisfying these conditions is called a {\em perturbation model}.

\begin{dfn}[\bf BM-Smoothed poly-time~\cite{Blaser2015:Smoothed}]\label{dfn:smoothedpoly}  Given a perturbation model $\md=\{D_{\ell,x,\phi}\}$, an algorithm $\alg$ is {\em BM-smoothed poly-time} if there exists $\epsilon>0$ such that  for all $D_{\ell,x,\phi}\in \md$, 
\begin{equation}\label{eq:smoothedpoly}
\expect_{y\sim D_{\ell,x,\phi}}(\rt{\alg}(y)^\epsilon) = O(n\cdot \phi \cdot N_{\ell,x})
\end{equation}
\end{dfn}
The exponent $\epsilon$ is defined following the average-case complexity theory, which implies that $\Pr_{y\sim D_{\ell,x,\phi}}(\rt{\alg}(y)\ge t) \le \frac{\text{poly}(\ell)}{t^\epsilon}N_{\ell,x}\phi$~\cite[Thm.~2.3]{Blaser2015:Smoothed}. Note that being BM-smoothed poly-time does not imply that the algorithm has polynomial smoothed runtime under~\citet{Spielman2009:Smoothed}'s framework. $N_{\ell,x}$ in the right hand side is introduced for a technical reason, which allows the expected runtime to be exponentially large if $N_{\ell,x}\phi$ is exponentially large. See~\cite{Blaser2015:Smoothed}  for more discussions.


\section{Brute Force Search is BM-Smoothed Poly-Time for Kemeny and Slater}
\label{sec:brute}
Let $\brute$ denote the brute force search algorithm that first computes the Kemeny scores (respectively, Slater scores) for all $m!$ rankings, and then chooses the one with the minimum score. The following proposition states that $\brute$ is BM-smoothed poly-time for a large class of models.

\begin{prop}\label{prop:brute} For any fixed $0<\underline\varphi \le \overline\varphi<1$,  when $m\ge 2^{(3-\overline\varphi)/(1-\overline\varphi)}$, \brute{} is BM-smoothed poly-time for \kr{} and \slaterranking{} w.r.t.~$\vec\mm_{\mallows}^{[\underline\varphi, \overline\varphi]}$, where the dispersion parameter is discretized.
\end{prop}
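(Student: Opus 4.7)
The plan is to verify the defining inequality of Definition~\ref{dfn:smoothedpoly} directly: exhibit some $\epsilon>0$ such that for every distribution $D_{\ell,x,\phi}$ in the perturbation model induced by $\vec\mm_{\mallows}^{[\underline\varphi,\overline\varphi]}$,
\[
\expect_{y\sim D_{\ell,x,\phi}}\bigl[\rt{\brute}(y)^\epsilon\bigr]=O(n\cdot \phi\cdot N_{\ell,x}).
\]
The first simplification is that $\rt{\brute}(y)$ is independent of the input $y$: for both \kr{} and \slaterranking{}, \brute{} enumerates all $m!$ rankings and evaluates each score in $\text{poly}(m,n)$ time, so $\rt{\brute}(y)=m!\cdot q(m,n)$ for a fixed polynomial $q$, and the expectation collapses to the deterministic quantity $(m!)^\epsilon q(m,n)^\epsilon$.

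Next I would identify the quantities $\phi$ and $N_{\ell,x}$. The support of any $D$ from the model is all $n$-profiles over $\ml(\ma_m)$, so $N_{\ell,x}=(m!)^n$. Since $D$ is a product of $n$ Mallows distributions with dispersions $\varphi_j\in[\underline\varphi,\overline\varphi]$, its maximum value is attained when each coordinate equals its central ranking, giving $\max_y D(y)=\prod_j 1/Z_{\varphi_j}$; monotonicity of $Z_\varphi$ in $\varphi$ then yields $\phi\ge 1/Z_{\overline\varphi}^n$ for the tightest admissible $\phi$. Consequently $\phi\cdot N_{\ell,x}\ge (m!/Z_{\overline\varphi})^n$, and it suffices to show
\[
(m!)^\epsilon q(m,n)^\epsilon \;\le\; C\cdot n\cdot (m!/Z_{\overline\varphi})^n
\]
for all $m\ge 2^{(3-\overline\varphi)/(1-\overline\varphi)}$, all $n\ge 1$, and some constant $C$ depending only on $\overline\varphi$.

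The crux is a quantitative lower bound on $m!/Z_{\overline\varphi}$. Using the elementary estimate $(1-\overline\varphi^{\,i})/(1-\overline\varphi)\le 1/(1-\overline\varphi)$, I obtain $Z_{\overline\varphi}\le (1-\overline\varphi)^{-(m-1)}$ and hence $m!/Z_{\overline\varphi}\ge m!(1-\overline\varphi)^{m-1}$. Once $m$ is large enough that this lower bound exceeds $1$ (which is exactly where the explicit threshold $m\ge 2^{(3-\overline\varphi)/(1-\overline\varphi)}$ enters via Stirling's formula), the right-hand side of the target inequality is non-decreasing in $n$, so the worst case is $n=1$; there, taking logarithms and applying $\log(m!)\ge m\log m - m$ shows that a sufficiently small but fixed $\epsilon>0$ makes the dominant term $(1-\epsilon)m\log m$ beat $(m-1)\log(1/(1-\overline\varphi))$. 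The main obstacle I anticipate is calibrating the constants to land exactly on the stated threshold: a sharper bound on $Z_{\overline\varphi}$ via $(1-\overline\varphi^{\,i})/(1-\overline\varphi)\le\min(i,1/(1-\overline\varphi))$ could lower the required $m$ but complicates the algebra, while the crude bound forces a careful choice of $\epsilon$ (possibly depending on $\overline\varphi$) to match the clean form $2^{(3-\overline\varphi)/(1-\overline\varphi)}$. The argument transfers verbatim to \slaterranking{} since $\rt{\brute}$ and the perturbation model are unchanged.
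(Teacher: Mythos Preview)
Your proposal is correct and follows essentially the same route as the paper: identify $N_{\ell,x}=(m!)^n$, lower-bound $\phi$ by $(1/Z_{\overline\varphi})^n\ge (1-\overline\varphi)^{n(m-1)}$, note that $\rt{\brute}=O(m!\,nm^2)$ is input-independent, and compare $(m!\,nm^2)^\epsilon$ to $\ell\cdot(m!(1-\overline\varphi)^{m-1})^n$ via Stirling. The only difference is that the paper does not leave $\epsilon$ to be calibrated later but fixes $\epsilon=\tfrac12$ at the outset; the threshold $m\ge 2^{(3-\overline\varphi)/(1-\overline\varphi)}$ is precisely what makes $(m!)^{1-1/(2n)}(1-\overline\varphi)^{m-1}>1$ hold for all $n\ge 1$, so your anticipated ``main obstacle'' dissolves once you commit to $\epsilon=\tfrac12$.
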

\begin{proof}
We first translate $\vec\mm_{\mallows}^{[\underline\varphi, \overline\varphi]}$ to the $D_{\ell,x,\phi}$ notation. For any $m$, $n$, and any $\vec\pi\in \Pi_{m}^n$, let $x=\vec\pi$ represent the central rankings and dispersion parameters for the $n$ agents. Therefore, $\ell = \Theta(nm\log m)$, $N_{\ell,x} = (m!)^n$, and $\phi \ge (\frac{1}{Z_{\underline\varphi}})^n\ge (1-\overline\varphi)^{n(m-1)}$. Let $\epsilon = \frac 12$. For any $n$-profile $P'\in\ml(\ma_m)^n$, $\rt{\brute}(P') = O(m!nm^2)$ for both \kr{} and \slaterranking{}. Therefore, $\expect_{P'\sim D_{\ell,x,\phi}}(\rt{\brute}(P')^\epsilon) = O((m!nm^2)^\epsilon)$. To prove the proposition, it suffices to prove that $nm\log m (m!(1-\overline\varphi)^{m-1})^n>(m!nm^2)^{1/2}$ for any $m>2^{(3-\overline\varphi)/(1-\overline\varphi)}$ and $n\ge 1$. This is proved by the following series of inequalities.
\begin{align*}
& m>2^{(3-\overline\varphi)/(1-\overline\varphi)}\Leftrightarrow m\log m > m(\frac{2}{1-\overline\varphi}+1)\\
\Rightarrow & (m+\frac 12) \log m -m >(m-1)\frac{2}{1-\overline\varphi}\\
\Rightarrow &\log m! > (m-1)\frac{2}{1-\overline\varphi} \ \ \ \text{(Stirling's formula)}\\
\Rightarrow & (m!)^{1-\frac{1}{2n}}(1-\overline\varphi)^{m-1}>1\\
\Rightarrow & nm\log m (m!(1-\overline\varphi)^{m-1})^n > (m!nm^2)^{1/2}
\end{align*}
\end{proof}

The paradox illustrated in Proposition~\ref{prop:brute} comes from two sources. The first source is the  $N_{\ell,x}\phi$ term in (\ref{eq:smoothedpoly}). This means that the expected runtime$^\epsilon$ is allowed to be exponentially large when the perturbation is small. Therefore, the problem becomes interesting only when the perturbation is large. This is in sharp contrast to the classical smoothed complexity analysis, where~\citet{Spielman2009:Smoothed} are ``most interested in ... slight perturbation". In Proposition~\ref{prop:brute}, $\vec\mm_{\mallows}^{[\underline\varphi, \overline\varphi]}$ can be viewed as slight perturbations when $n$ is large. Therefore, our setting is closer to \citet{Spielman2009:Smoothed}'s setting than to~\citet{Blaser2015:Smoothed}'s setting.

The second source is the $\epsilon$ exponent in (\ref{eq:smoothedpoly}), which aims at capturing typical runtime as in the average-case complexity theory~\cite{Bogdanov2006:Average-Case} instead of the averaged runtime  as in the classical smoothed  complexity  analysis~\cite{Spielman2009:Smoothed}.

\section{Smoothed Hardness of Kemeny and Slater}

In this section, we follow the classical smoothed runtime analysis (Definition~\ref{dfn:spt})  to analyze  \kr{} and \slaterranking{} w.r.t.~models that satisfy Assumption~\ref{asmpt:main} below. We first recall the {\em orthogonal decomposition} of weighted directed graphs~\cite{Young74:Axiomatization,Zwicker2018:Cycles}. 

A WMG $G_\text{cyc}$ is called a {\em cycle}, if the absolute weight of any edge is $0$ or $1$, and the  edges with positive weights forms a cycle. Let $a\in\ma_m$, a WMG $G_a$ is called a {\em co-cycle} centered at $a$, if for any $b\in\ma_m$ such that $b\ne a$, $w_{G_a}(a,b)=-w_{G_a}(b,a) =1$ and all other edges have weight $0$. 

Note that any WMG $G$ can be viewed as a vector in ${\mathbb R}^{\frac{m(m-1)}{2}}$ whose components are indexed by $(i_1,i_2)$, where $1\le i_1<i_2\le m$, with value $w_{G}(a_{i_1},a_{i_2})$. Given any pair of WMGs $G_1$ and $G_2$, we define their dot product as
$$G_1\cdot G_2 = \sum_{1\le i_1<i_2\le m}w_{G_1}(a_{i_1},a_{i_2})\times w_{G_2}(a_{i_1},a_{i_2})$$ 

Let $\mV_{cyc}\subseteq {\mathbb R}^{\frac{m(m+1)}{2}}$ (respectively, $\mV_{co}\subseteq{\mathbb R}^{\frac{m(m+1)}{2}}$) denote the linear span of cycles (respectively, co-cycles). It has been proved that $\mV_{cyc}$ and $\mV_{co}$ are orthogonal, $\dim(\mV_{cyc}) = {m-1\choose 2}$ (all 3-cycles containing $a_1$ in the increasing direction of subscripts constitute a  non-orthogonal basis) and $\dim(\mV_{co}) = m-1$ (co-cycles centered at any fixed set of $m-1$ alternatives constitute a  non-orthogonal basis). An {\em orthogonal decomposition} of a WMG $G$ is a decomposition of $G$ into its projections to $\mV_{cyc}$ and $\mV_{co}$, respectively.  



\begin{ex} Let $\theta=(a_1\succ a_2\succ a_3,\varphi)$ denote a parameter in $\mm_{\mallows,3}^{[\underline\varphi,\overline\varphi]}$.  $\wmg(\theta)$, which is the WMG of the fractional profile represented by the distribution corresponding to $\theta$, and its orthogonal decomposition are shown in Figure~\ref{fig:wmg}. Note that the weight on the co-cycle centered at $a_3$ is negative.

\begin{figure}[htp]
\centering
\includegraphics[width=.5\textwidth]{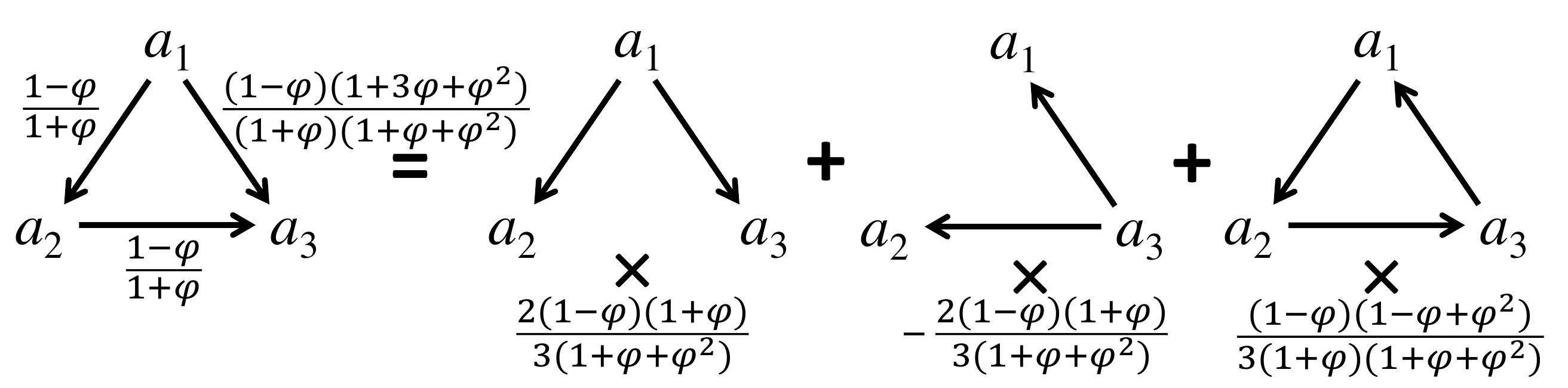} 
\caption{\small The WMG of $(a_1\succ a_2\succ a_3,\varphi)$ in $\mm_{\mallows,3}$ and its orthogonal decomposition.\label{fig:wmg}}
\end{figure}
\end{ex}


\begin{asmpt}\label{asmpt:main} $\vec\mm=\{\mm_m=(\Theta_m,\ml(\ma_m),\Pi_m):m\ge 3\}$ satisfies the following conditions:

(i) For any $m\ge 3$, $\mm_m$ is {\em P-samplable}, i.e.~it admits a poly-time sampling algorithm.

(ii) For any $m\ge 3$, $\mm_m$ is neutral, i.e.~for each distribution $\pi\in\Pi_m$ and any distribution $\sigma$ over the alternatives, we have $\sigma(\pi)\in\Pi_m$.

(iii) There exist constants $k\ge 0$ and $A>0$ such that for any $m\ge 3$, there exist $\pi_{3c}\in \Pi_m$ such that $\wmg(\pi_{3c})$ has a 3-cycle component $\gcyc$ with $\wmg(\pi_{3c})\cdot \gcyc>\frac{A}{m^{k}}$.
\end{asmpt}
The first condition is the ``most natural restriction'' on general distributions~\cite[p.~17]{Bogdanov2006:Average-Case}, which is less restrictive than the commonly-studied {\em P-computable} distributions~\cite[p.~18]{Bogdanov2006:Average-Case}. The second and third conditions imply that $\vec\mm$ is rich enough. 
As we will see Example~\ref{ex:assumption-hold-mallows}, a large class of models, including the model proposed by~\citet{Baumeister2020:Towards}, satisfy Assumption~\ref{asmpt:main}.

\begin{thm}[\bf Smoothed Hardness of \kr{}] \label{thm:kemenyhard} For any series of single-agent preference models $\vec\mm$ that satisfies Assumption~\ref{asmpt:main}, if there exists a smoothed poly-time algorithm for \kr{}  w.r.t.~$\vec\mm$, then {\sc NP}$=${\sc RP}.
\end{thm}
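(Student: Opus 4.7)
The plan is to construct a randomized polynomial-time reduction from worst-case \kr{}---NP-hard by~\cite{Bartholdi89:Voting}---to the smoothed setting under $\vec\mm$, so that a hypothetical smoothed poly-time algorithm $\alg$ for \kr{} would yield an RP algorithm for the worst-case problem and hence NP $=$ RP. Given an arbitrary worst-case instance $P^*$ on $m$ alternatives, after standard preprocessing (along the lines of the Feedback Arc Set reduction used to establish NP-hardness) I may assume that $\wmg(P^*)$ lies in the cycle subspace $\mV_{cyc}$ and that the Kemeny optimum of $P^*$ is unique and separated from every other ranking by an integer score gap. The reduction will emit a smoothed profile $P$ whose sampled WMG is essentially a rescaled copy of $\wmg(P^*)$, so that running $\alg$ on $P$ recovers the optimum of $P^*$.

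The central device is condition~(iii) of Assumption~\ref{asmpt:main}: some $\pi_{3c}\in\Pi_m$ satisfies $\wmg(\pi_{3c})\cdot\gcyc > A/m^{k}$ for a 3-cycle $\gcyc$. Neutrality (condition~(ii)) then provides $\sigma(\pi_{3c})\in\Pi_m$ for every permutation $\sigma$ of $\ma_m$, whose WMG is the permuted image of $\wmg(\pi_{3c})$. Since $\mV_{cyc}$ is spanned by the $S_m$-orbit of any nonzero cycle component and the $S_m$-average of $\sigma(\wmg(\pi_{3c}))$ is $0$ (placing $0$ in the relative interior of the polytope of achievable expected WMGs), I would exhibit non-negative integer multiplicities $\{n_\sigma\}_{\sigma\in S_m}$ with $\sum_\sigma n_\sigma=n=\text{poly}(m,n^*)$ such that $\sum_\sigma n_\sigma\,\sigma(\wmg(\pi_{3c}))=\lambda\,\wmg(P^*)$ for some $\lambda=\Omega(m^{-k})$. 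This defines $\vec\pi\in\Pi_m^n$, and by P-samplability (condition~(i)) the profile $P\sim\vec\pi$ is drawn in polynomial time.

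A Hoeffding bound on each of the $\binom{m}{2}$ pairwise margins shows that $\wmg(P)$ lies within additive $o(\lambda n)$ of $\lambda n\cdot\wmg(P^*)$ on every edge with probability at least $2/3$ once $n$ is a sufficiently large polynomial in $m^{k}$ and the $\ell_1$-size of $\wmg(P^*)$. Because Kemeny Ranking depends on the WMG only through a linear functional, and the assumed integer score gap of $P^*$ is preserved under sufficiently small additive perturbations, any Kemeny ranking of such a $P$ coincides with the unique Kemeny ranking of $P^*$. Applying Markov's inequality to the polynomial expected runtime of $\alg$ on $P$ truncates the computation after a polynomial number of steps with probability $\ge 1/2$, which jointly with the concentration event returns the Kemeny ranking of $P^*$ with constant probability---closing the RP argument.

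The main obstacle is the ``gadget'' step in the second paragraph: producing polynomially many non-negative multiplicities $\{n_\sigma\}$ that realize the target $\lambda\,\wmg(P^*)$ uniformly, in spite of the fact that $\wmg(\pi_{3c})$ generically has unknown contributions on many other 3-cycles and on the co-cycle subspace $\mV_{co}$. A concrete route is to decompose $\wmg(\pi_{3c})=w_{cyc}+w_{co}$ along $\mV_{cyc}\oplus\mV_{co}$ and to average $\sigma(\pi_{3c})$ over a small subgroup of $S_m$ (for instance a suitable combination of cyclic shifts and a swap on the vertices of $\gcyc$) that kills $w_{co}$ while preserving an $\Omega(m^{-k})$ signal on $\gcyc$; one then invokes the irreducibility/transitivity of the $S_m$-action on $\mV_{cyc}$ to express $\wmg(P^*)$ as a non-negative combination of permuted copies with only polynomial blow-up. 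Once this linear-algebraic cleanup is in place, the sample-complexity, concentration, and Markov-truncation steps follow the standard smoothed-to-worst-case template of~\cite{Huang2007:On-the-Approximation}.
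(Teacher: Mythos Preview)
Your plan follows the same high-level template as the paper's proof---build a parameter profile whose expected WMG reproduces a hard target graph, sample, concentrate, truncate via Markov---and you correctly isolate the gadget step (cancelling the co-cycle part of $\wmg(\pi_{3c})$ and keeping an $\Omega(m^{-k})$ signal on a single $3$-cycle) as the crux. That part of your sketch matches the paper's Step~1 (Claims~\ref{claim:qtheta}--\ref{claim:3cycle}), which carries out exactly the subgroup-averaging idea you describe and delivers the polynomial bounds you assert but do not prove.

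Where your proposal has a genuine gap is the ``standard preprocessing'' that places the target WMG in $\mV_{cyc}$. This is not a consequence of the Bartholdi et al.\ reduction, and general \kr{} instances do have a nontrivial co-cycle component; Assumption~\ref{asmpt:main} gives you no handle on $\mV_{co}$ (that is precisely why the Slater theorem needs the extra condition~(iv)). The paper resolves this by reducing not from worst-case \kr{} but from {\sc Eulerian Feedback Arc Set}~\cite{Perrot2015:Feedback}: an Eulerian digraph has balanced in/out-degree at every vertex, so its signed incidence vector lies in $\mV_{cyc}$ and decomposes into polynomially many $3$-cycles. Choosing {\sc EFAS} is the missing idea in your first paragraph.

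A second, smaller issue: your uniqueness-and-gap assumption on the Kemeny optimum is both unnecessary and hard to guarantee via reduction. The paper avoids it by working with the decision problem $(G,t)$: after $\alg$ outputs some ranking $R^*$ within the time budget, one simply checks whether $R^*$ certifies a YES answer to $(G,t)$. NO instances are never accepted, and for YES instances the concentration argument shows that any Kemeny-optimal ranking of the sampled profile is a feasible solution to $(G,t)$, with no need for uniqueness.
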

\begin{proof} The theorem is proved by contradiction. Suppose for the sake of contradiction that a smoothed poly-time algorithm $\alg$ for \kr{} exists, we will prove that there exists an efficient randomized algorithm for the NP-hard problem {\sc Eulerian Feedback Arc Set (EFAS)}~\cite{Perrot2015:Feedback}. An instance of \efas{} is denoted by $(G,t)$, where $t\in\mathbb N$ and $G$ is a directed unweighted Eulerian graph, which means that there exists a Eulerian cycle that passes each edge exactly once. We are asked to decide whether $G$ can be made acyclic by removing no more than $t$ edges.

Given a single-agent preference model, a (fractional) {\em parameter profile} $P^\Theta\in \Theta_m^n$ is a collection of $n>0$ parameters, where $n$ may not be an integer. Note that $P^\Theta$ naturally leads to a fractional preference profile, where the weight on each ranking represents its total weighted ``probability'' under all parameters in $P^\Theta$. Therefore, $\wmg$ and $\umg$ can be naturally extended to parameter profiles. 

The high-level idea of the proof is the following. For any \efas{} instance $(G,t)$,  we will construct a fractional parameter profile $P^\Theta_{G}$ whose WMG is the same as the WMG equivalent of $G$, where the weight on each edge in $G$ is $1$ or $-1$, depending on its direction. Then, we sample a profile $P'$ from $P^\Theta_G$ and try to run $\alg$ to compute the Kemeny ranking $R^*$. If $\alg$ successfully returns $R^*$ in less than three times of its expected runtime (which is polynomial), then we proceed to check whether $R^*$ leads to a YES answer to $(G,t)$. We give a NO answer if $R^*$ cannot be obtained from $G$ by removing no more than $t$ edges, or $\alg$ fails to terminate in time. Clearly this polynomial-time procedure always returns NO if $(G,t)$ is a  NO instances.  We then prove that a YES instance will receive a YES  answer with probability at least $1/2$, which means that \efas{} is in RP and therefore proves the theorem.

Formally, the proof proceeds in three steps. 
In Step 1, we use permutations of $\picyc$ guaranteed by Assumption~\ref{asmpt:main} to construct a fractional parameter profile $P^\Theta_\gcyc$ whose WMG is a 3-cycle. In Step 2, we use $P^\Theta_\gcyc$  to construct $P^\Theta_G$. In Step 3, we show that $\alg$ can be leveraged to Algorithm~\ref{alg:rpalg} to prove that  \efas{} is in RP as discussed above.

\noindent{\bf Step 1.~Construct a parameter profile $P^\Theta_\gcyc$ whose WMG is $\bm{\gcyc}$.} W.l.o.g.~let $\gcyc= a_1\ra a_2\ra a_3\ra a_1$. 
Let $\sigma_1$ denote an arbitrary cyclic permutation among $\{a_1,a_2,a_3\}$ and let $\sigma_2$ denote an arbitrary cyclic permutation among other alternatives. We define the following set of $6(m-3)$ permutations. 
$$\op_\gcyc = \{ \sigma_1^i\circ \sigma_2^t, \sigma_1^{i}\circ\sigma_2^{-t}:1\le i\le 3, 1\le t\le m-3\}$$
We note that $\op_\gcyc$ can be naturally applied to   linear orders,  fractional profiles, distributions over $\ml(\ma_m)$, parameters in $\Theta_m$, and weighted majority graphs. Let 
$\theta_{3c}$ denote the parameter corresponding to  $\pi_{3c}$  and let $Q^\Theta_\gcyc = \op_\gcyc(\theta_{3c})$ denote the parameter profile obtained from $\theta_{3c}$ by applying permutations in $O_\gcyc$. Some properties of  $Q^\Theta_\gcyc$ are described in the following claim, whose proof follows after definition. All missing proofs can be found in Appendix~\ref{app:proofs}.

\begin{claim}\label{claim:qtheta}
$|Q^\Theta_\gcyc| = O(m)$, $Q^\Theta_\gcyc$ consists of $O(m)$ types of parameters. $\wmg(Q^\Theta_\gcyc)$ consists of the following two types of edges.
(1) There are three edges $a_1\ra a_2$, $a_2\ra a_3$, $a_3\ra a_1$, each has weight $2(m-3)\alpha$, where $\alpha = \wmg(\picyc)\cdot \gcyc$.
(2) There are edges from $\{a_1,a_2,a_3\}$ to $\{a_4,\ldots,a_m\}$ whose weights are $\beta = 2\sum_{(d_1,d_2)\in \{a_1,a_2,a_3\}\times \{a_4,\ldots,a_m\}} w_{\pi_{3c}}(d_1,d_2)$.
\end{claim}
$|Q^\Theta_\gcyc|$ and the types of parameters used in $Q^\Theta_\gcyc$ will be crucial later in proving that Algorithm~\ref{alg:rpalg} runs in polynomial time. If $\beta =0$ then we let $P^\Theta_\gcyc = Q^\Theta_\gcyc$. 
If $\beta>0$, then we will provide a gadget soon to ``cancel'' edges between $\{a_1,a_2,a_3\}$ and $\ma_m\backslash\{a_1,a_2,a_3\}$. 
For any alternative $a$, let $\sigma_a$ denote the permutation such that $\sigma_a(a) = a$
and the remaining alternatives are permuted in the cyclic way in the increasing order of their subscripts. That is, $\sigma_a = a_{i_1}\ra a_{i_2}\ra \cdots\ra a_{i_{m-1}}\ra a_{i_1}$, where $i_1<\cdots<i_{m-1}$. We let $\eta$ denote the permutation that switches $a_{i_s}$ and $a_{i_{m-s}}$ for all $1\le s\le m-1$. For example, when $m=5$, $\sigma_{a_1}$ is the cyclic permutation $a_2\ra a_3\ra a_4\ra a_5\ra a_2$ and $\eta_{a_1}$ switches two pairs of alternatives: $(a_2, a_5)$ and $(a_3,a_4)$. 

\begin{dfn}\label{dfn:op}
For any alternative $a$, let $\op_a$ denote the following set of $2(m-1)$ permutations over $\ma_m$.
$$\op_a  = \{ \sigma_a^i , \sigma_a^{i}\circ \eta_a : 1\le i\le m-1\}$$
\end{dfn}

Like $\op_\gcyc$, $\op_a$ can be naturally applied to linear orders,  fractional profiles, distributions over $\ml(\ma_m)$, parameters in $\Theta_m$, and weighted majority graphs. Let $P_a^\Theta = \op_a(\theta_{3c})$. We have the following claim.

\begin{claim}
\label{claim:patheta} $|P_a^\Theta|=2(m-1)$ and 
$P_a^\Theta$ consists of $2(m-1)$ types of parameters. $\wmg(P^\Theta_a)$ is a co-cycle whose center is $a$ and the absolute weight on any non-zero edge is $2\sum_{(d_1,d_2)\in \{a\}\times {\ma \backslash \{a\}}}w_{\pi_{3c}}(d_1,d_2)$.
\end{claim}

Let $P_1^\Theta = \op_{a_1}(Q_{\gcyc}^\Theta)$.

\begin{claim}\label{claim:cocycle} $|P_1^\Theta|=O(m^2)$ and $P_1^\Theta$ consists of $O(m^2)$ types of parameters. $\wmg(P^\Theta_1)$ is a co-cycle whose center is $a_1$ and the absolute weight on any non-zero edge is $2(m-3)\beta$.
\end{claim}

For any pair of alternative $a,d$, let $\sigma_{a\leftrightarrow d}$ denote the permutation that exchanges $a$ and $d$. It follows that $\sigma_{a\leftrightarrow d}(P^\Theta_1)$ is a parameter profile whose WMG is a co-cycle centered at $d$. 
We are now ready to define the fractional parameter profile $P^\Theta_\gcyc$ whose \wmg{} resembles $\gcyc$. 
\begin{dfn} 
Let $P^\Theta_\gcyc$ denote the fractional parameter profile that consists of  (1) $\frac{1}{2(m-3)\alpha}$ copies of $Q^\Theta_\gcyc$, and (2)
for each $d\in\{a_4,\ldots,a_m\}$,  $\frac{1}{4(m-3)^2\alpha}$ copies  of $\sigma_{a_1\leftrightarrow d}(P^\Theta_1)$.
\end{dfn}
We have the following claim about $P^\Theta_\gcyc$.

\begin{claim}\label{claim:3cycle}
$|P^\Theta_\gcyc|  = O(m^{k})$ and  
$P^\Theta_\gcyc$ consists of $O(m^3)$ different types of parameters. $\wmg(P^\Theta_\gcyc)=\gcyc$. 
\end{claim}

\noindent{\bf Step 2.~Construct a parameter profile $\bm{P_G^\Theta}$ for \efas{}.}  
Because $\mm_m$ is neutral, for any 3-cycle $G' = a_{i_1}\ra a_{i_2}\ra a_{i_3}\ra a_{i_1}$ we can apply a permutation $\sigma_{G'}$ that maps $a_s$ to $a_{i_s}$ ($s=1,2,3$) on $P_1^\Theta$, which means that $\wmg(\sigma(P_1^\Theta))$ resembles $G'$. 

It is not hard to verify that any cycle of length $T$ can be obtained from the union of $T-2$ $3$-cycles, which can be computed in $O(m^2)$ time. Therefore, given the \efas{} instance $(G,t)$, we first compute $G=\bigcup_{s=1}^S G_s'$, where each $G_s'$ is a $3$-cycle, and $S\le {m\choose 2}$.  Then, we let 
$P_G^\Theta = \bigcup_{s=1}^S \sigma_{G_s'}(P_1^\Theta)$. It follows from Claim~\ref{claim:3cycle} that $|P_G^\Theta| = O(m^{k+2})$, $P_G^\Theta$ consists of $O(m^5)$ types of parameters, and $\wmg(P_G^\Theta) = G$.

\noindent{\bf Step 3.~Use $\alg$ to solve \efas.} Let $K = 11+2k$, which means that $K>9$. We first define an integral parameter profile $P_G^{\Theta*}$ of $n=\Theta(m^{K})$ parameters that is approximately $\frac{m^{K}}{|P_G^{\Theta}|}$ copies of $P_G^\Theta$ up to $O(m^5)$ in  $L_\infty$ error. Formally, let 
\begin{equation}\label{eq:pg}
P_G^{\Theta*} = \lfloor P_G^\Theta \cdot \dfrac{m^{K}}{|P_G^{\Theta}|}\rfloor
\end{equation}

Let $n=|P_G^{\Theta*}| $. Because the number of different types of parameters in $P_G^{\Theta*}$ is $O(m^5)$, we have $n = m^{K}-O(m^5)$, $\|\wmg(P_G^{\Theta*}) - \wmg(P_G^{\Theta} \cdot \frac{m^{K}}{|P_G^{\Theta}|}) \|_\infty  = O(m^5)$, and $\|\wmg(P_G^{\Theta*}) - G \cdot \frac{m^{K}}{|P_G^{\Theta}|}) \|_\infty  = O(m^5)$. 

We now prove that $\alg$ can be leveraged to provide an RP algorithm (Algorithm~\ref{alg:rpalg}) for \efas{}.

\begin{algorithm}
\SetAlgoLined
{\bf Input}: An \efas{} instance $(G,t)$,  $\alg$ for \kr{} whose smoothed runtime is $T$.

Compute a parameter profile $P_G^{\Theta*}$ according to (\ref{eq:pg}).

Sample a profile $P'$ from $\vec \mm_m$ given $P_G^{\Theta*}$ and run $\alg$ on $P'$.

\eIf{$\alg$ returns $R^*$ within $3T$ time and $R^*$ is a solution to $(G,t)$}{
   {\bf return} YES}{
   {\bf return} NO
  }\caption{Algorithm for \efas{}.}
\label{alg:rpalg}
\end{algorithm}
Notice that sampling $P'$ from  $P_G^{\Theta*}$ takes polynomial time because $\vec\mm$ is P-samplable (Assumption~\ref{asmpt:main} (i)). It follows that Algorithm~\ref{alg:rpalg} is a polynomial-time randomized algorithm. Clearly if $(G,t)$ is a NO instance then Algorithm~\ref{alg:rpalg} returns NO. Therefore, to prove that Algorithm~\ref{alg:rpalg} is an RP algorithm it suffices to prove that if $(G,t)$ is a YES instance then Algorithm~\ref{alg:rpalg}  returns YES  with $>\frac 12$ probability. 

Let $G_n = G \cdot \frac{m^{K}}{|P_G^{\Theta}|}$.   We first prove in the following claim that with exponentially small probability $\wmg(P')$ is different from $G_n $ by more than $\Omega(m^{\frac{K+1}{2}})$.

\begin{claim}\label{claim:concentration}
$\Pr(\|\wmg(P') - G_n \|_\infty > \Omega(m^{\frac{K+1}{2}}))< \exp^{-\Omega(m)}$.
\end{claim}
\begin{proof} We first show that for each pairwise comparison $b$ vs.~$c$, $\Pr(|w_{P'}(b,c) - w_G(b,c) \cdot \frac{m^{K}}{|P_G^{\Theta}|}| = \Omega(m^{\frac{K+1}{2}}))< \exp^{-\Omega(m)}$, then apply union bound to all pairwise comparisons. Notice that $w_{P'}(b,c)$ can be viewed as the sum of $n$ independent (not necessarily identical) bounded random variables, each of which corresponds to the pairwise comparison between $b$ and $c$ in a ranking---if $b\succ c$ in the ranking then the random variable takes $1$, otherwise the random variable takes $-1$. By Hoeffding's inequality for bounded random variables, we have:
\begin{align*}
&\Pr(|w_{P'}(b,c) - \expect(w_{P'}(b,c)) |> \Omega(m^{\frac{K+1}{2}}))\\
&<\exp\{-\frac{\Omega(m^{\frac{K+1}{2}}/n)^2 n^2}{4n}\} = \exp\{-\Omega(m)\}
\end{align*}
Also notice that $\expect(w_{P'}(b,c)) = w_{P_G^{\Theta*}}(b,c)$ and $\|\wmg(P_G^{\Theta*}) - G_n  \|  = O(m^{5})=O(m^{\frac{K+1}{2}})$.  
\end{proof}

Suppose $(G,t)$ is a YES instance. That is, there exists a ranking $R'$ whose \kt{} distance to $G$ is no more than $t$. Due to Markov's inequality, $\alg$ returns a Kemeny ranking $R^*$ with probability $\ge \frac 23$. We now prove that $R^*$ is a solution to $(G,t)$ with probability $1-\exp^{-\Omega(m)}$.

Note that  for any ranking $R$, $|\kt(R,P') - \kt(R,G_n)| =  O(m^{\frac{K+5}{2}})$ holds with probability $1-\exp^{-\Omega(m)}$, which follows after applying Claim~\ref{claim:concentration} and the union bound to all  $\Theta(m^2)$ pairwise comparisons. In this case $R^*$ is a solution to $(G,t)$, because if it is not, then  $\kt(R^* ,G_n)-\kt(R',G_n)> \frac{m^K}{|P_G^{\Theta}|} =\Omega(m^{K-k-2})$. We note that $|\kt(R',P') - \kt(R',G_n)| =  O(m^{\frac{K+1}{2}})\times O(m^2)=O(m^{\frac{K+5}{2}})$ and $|\kt(R^*,P') - \kt(R^*,G_n)| =  O(m^{\frac{K+5}{2}})$, which means that $\kt(R^* ,P')-\kt(R',P')=\Theta(m^{K-k-2}) - 2 O(m^{\frac{K+5}{2}})>0$ (because $K=11+2k$), which contradicts the optimality of $R^*$. 

Therefore, Algorithm~\ref{alg:rpalg} returns YES with probability at least $\frac23-\exp^{-\Omega(m)}$, which proves that \efas{} is in {\sc RP}. Since \efas{} is NP-hard~\cite{Perrot2015:Feedback} and RP$\subseteq$NP. It follows that RP$=$NP.
\end{proof}

We now prove a similar theorem for \slaterranking{} with an additional condition.

{\em (iv)  There exist constants $k^*\ge 0$ and $B>0$ such that for any $m\ge 3$, there exist $\pi_{co}\in \Pi_m$ such that $\wmg(\pi_{co})$ has a co-cycle component $\gco$ with $\wmg(\pi_{co})\cdot \gco>\frac{B}{m^{k^*}}$.}

Condition (iv) can be viewed as the co-cycle counterpart to (iii) in Assumption~\ref{asmpt:main}.  


\begin{thm}
[\bf Smoothed Hardness of \slaterranking{}] \label{thm:slaterhard} For any series of single-agent preference models $\vec\mm$ that satisfies Assumption~\ref{asmpt:main} and (iv), if there exists a smoothed poly-time algorithm for \slaterranking{}  w.r.t.~$\vec\mm$, then {\sc NP}$=${\sc RP}.
\end{thm}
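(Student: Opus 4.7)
The plan mirrors the three-step reduction used for Theorem~\ref{thm:kemenyhard}, with \efas{} replaced by the {\sc NP}-hard \tfas{} problem~\cite{Conitzer06:Slater}: given a tournament $T$ and $t\in\mathbb{N}$, decide whether $T$ can be made acyclic by reversing at most $t$ edges. The guiding observation is that if we can sample an integer profile $P'$ with $\umg(P')=T$, then for any linear order $R$ the Slater score of $R$ on $P'$ equals $\kt(R,T)$, so an optimal Slater ranking on $P'$ directly yields an optimal reversal set for $(T,t)$.

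The first step builds, for each alternative $a$, a fractional parameter profile $P^{\Theta}_{\gco,a}$ whose WMG is a unit-weight co-cycle centered at $a$, with $O(m)$ parameters and $O(m)$ parameter types. Condition~(iv) supplies $\pico$ with $\wmg(\pico)\cdot\gco>B/m^{k^*}$; applying the permutation operator $\op_a$ of Definition~\ref{dfn:op} to the parameter corresponding to $\pico$ and averaging, exactly as in the proof of Claim~\ref{claim:patheta}, annihilates the $\mV_{cyc}$ projection and leaves a positive scalar multiple of the canonical co-cycle centered at $a$. Dividing out that scalar, which is at least $\Omega(1/m^{k^*})$, produces $P^{\Theta}_{\gco,a}$. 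Step~1 of the proof of Theorem~\ref{thm:kemenyhard} already supplies the $3$-cycle counterpart $P^{\Theta}_{\gcyc,C}$ for any $3$-cycle $C$, so we have generators for both summands of the orthogonal decomposition $\mV_{cyc}\oplus\mV_{co}$.

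In the second step, I would view the tournament $T$ as a WMG vector, decompose it orthogonally as $T=T_{cyc}+T_{co}$, and expand each summand in the corresponding basis ($3$-cycles through $a_1$ for $\mV_{cyc}$, co-cycles centered at $a_2,\ldots,a_m$ for $\mV_{co}$). Both change-of-basis matrices have $O(1)$-magnitude integer entries, so Cramer's rule yields rational coefficients with numerators and denominators polynomially bounded in $m$. Combining scaled copies of the $P^{\Theta}_{\gcyc,C}$ and $P^{\Theta}_{\gco,a}$ profiles and clearing denominators yields a fractional parameter profile $P^\Theta_T$ with $\wmg(P^\Theta_T)=T$, polynomial size, and polynomially many parameter types. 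Rounding as in~(\ref{eq:pg}) to an integer profile $P_T^{\Theta*}$ of size $n=\Theta(m^K)$ for a sufficiently large constant $K$ ensures that every target edge of $\wmg(P_T^{\Theta*})$ carries margin $\Omega(m^{K-c})$ in the correct direction, for a constant $c$ depending on $k$ and $k^*$.

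Finally, I would plug $P_T^{\Theta*}$ into Algorithm~\ref{alg:rpalg} with \kr{} replaced by \slaterranking{} and the acceptance test replaced by ``$R^*$ disagrees with $T$ on at most $t$ pairs''. A Hoeffding bound identical to Claim~\ref{claim:concentration} gives $\|\wmg(P')-\wmg(P_T^{\Theta*})\|_\infty=O(m^{(K+1)/2})$ with probability $1-\exp\{-\Omega(m)\}$; choosing $K$ so that $K-c>(K+1)/2$ forces every pairwise margin of $P'$ to carry the sign prescribed by $T$, hence $\umg(P')=T$, and Markov's inequality then gives an optimal Slater ranking within $3T$ time with probability at least $2/3$. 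Consequently Algorithm~\ref{alg:rpalg} accepts YES instances with probability $\ge 2/3-\exp\{-\Omega(m)\}>1/2$ and always rejects NO instances, placing \tfas{}, and hence {\sc NP}, in {\sc RP}. The main obstacle I anticipate is the simultaneous control in Step~2 of the basis denominators, the total mass of $P^\Theta_T$, and the signal-to-noise margin: the factor $m^{k^*}$ introduced by condition~(iv) enlarges the exponent $K$ beyond what sufficed for Kemeny, and condition~(iv) itself is indispensable because without a co-cycle generator the construction of Theorem~\ref{thm:kemenyhard} can only realise WMGs in $\mV_{cyc}$, whereas a generic tournament has a nontrivial $T_{co}$ component.
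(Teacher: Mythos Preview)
Your overall strategy matches the paper's: reduce from \tfas{}, build a parameter profile whose WMG is (a scalar multiple of) the input tournament $T$, sample $P'$, and argue via Hoeffding that $\umg(P')=T$ with high probability so that an optimal Slater ranking for $P'$ solves the \tfas{} instance. The paper likewise uses $\op_a(\theta_{co})$ to manufacture co-cycles (its Claim~\ref{claim:slater-cocycle}) and then combines cycles and co-cycles to realise $T$.

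The substantive difference is in how $T$ is synthesised. You decompose $T=T_{cyc}+T_{co}$ orthogonally and expand each summand in a basis, appealing to Cramer's rule for the coefficients. The paper instead supplies an explicit gadget (Claim~\ref{claim:oneedge}): a single edge of weight $m$ is the union of exactly $m-2$ unit $3$-cycles and $m$ unit co-cycles, so the whole tournament is assembled edge by edge from $O(m^3)$ unit building blocks with integer multiplicities. This buys the paper immediate polynomial bounds on both the number of parameter types and the total mass of $P^\Theta_G$, with no linear algebra needed.

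Your Cramer's-rule step has a gap as written. The cycle basis has $\binom{m-1}{2}=\Theta(m^2)$ elements, and for a generic $\Theta(m^2)\times\Theta(m^2)$ system with $O(1)$ integer entries Cramer's rule only bounds denominators by the Hadamard bound $m^{\Theta(m^2)}$; ``$O(1)$ entries'' does not imply polynomially bounded determinants. If the resulting coefficients were exponential, $|P^\Theta_T|$ would be exponential and the rounding step (\ref{eq:pg}) would no longer leave a $\Omega(m^{K-c})$ margin on every edge. The conclusion is in fact true for \emph{these particular} bases---e.g., the coefficient of the $3$-cycle $a_1\to a_i\to a_j\to a_1$ is simply $w_{T_{cyc}}(a_i,a_j)$, since that is the unique basis cycle touching the edge $(a_i,a_j)$, and the co-cycle Gram matrix $mI-J$ has inverse $\frac{1}{m}(I+J)$---but you must argue this directly rather than invoke Cramer's rule generically. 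You should also say how negative co-cycle coefficients are realised (e.g.~via $-G_a=\sum_{b\neq a}G_b$, costing only an extra factor of $m$). With these details supplied, your route is a valid alternative to the paper's edge-by-edge construction.
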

\begin{sketch} The high-level idea of the proof is similar to the proof of Theorem~\ref{thm:kemenyhard}. The difference is that in this proof we use the {\sc Tournament Feedback Arc Set (TFAS)} problem, which is NP-hard~\cite{Alon06:Ranking,Conitzer06:Slater}. Since it is unknown whether {\sc  Eulerian Tournament Feedback Arc Set} is NP-hard, in a \tfas{} instance $(G,t)$ it is possible that $G$ is not Eulerian. Therefore, we need a co-cycle component to construct a parameter profile whose WMG is $G$. Moreover, the weight on the co-cycle component cannot be too small, otherwise the construction will not be polynomial. Condition (iv) is used to guarantee the existence of a desirable co-cycle component as described. 

Slightly more formally, the proof proceeds in four steps. Step 1 is the same as the Step 1 in the proof of Theorem~\ref{thm:kemenyhard}. In Step 2, we use permutations of $\pico$, which is guaranteed by (iv), to construct a parameter profile whose WMG is a co-cycle. The reduction from \tfas{} will be presented in Step 3. In Step 4 we show that a smoothed poly-time algorithm for \slaterranking{} can be used to prove that \tfas{} is in RP.
\end{sketch}

The following example shows that Assumption~\ref{asmpt:main} and (iv) hold for a large class of Mallows-based 
models. 
\begin{ex}
\label{ex:assumption-hold-mallows}We show that for any $\underline\varphi \ne 1$, $\vec \mm_{\mallows}^{[\underline\varphi,\overline\varphi]}$ satisfies Assumption~\ref{asmpt:main} and (iv). (i) and (ii) have been discussed in Example~\ref{ex:singlemallows}. For any $\varphi\in [\underline\varphi,\overline\varphi]\cap (0,1)$, we show that (iii) and (iv) hold for $\pi = (a_1\succ \cdots\succ a_m, \varphi)$.  

For (iii), let $\gcyc = a_1\ra a_2\ra a_3\ra a_1$. \citet{Mallows57:Non-null} proved that under Mallows' model, the probability for $a\succ b$ only depends on $m,\varphi$, and the difference in the ranks of $a$ and $b$ in the central ranking. Therefore, for any $m$, $\wmg(\pi)\cdot \gcyc = 2\cdot\frac{1}{1+\varphi} - \frac{1+2\varphi}{(1+\varphi)(1+\varphi+\varphi^2)}= \frac{1+2\varphi^2}{(1+\varphi)(1+\varphi+\varphi^2)}=\Theta(1)$, see Figure~\ref{fig:wmg}. This means that $k=0$ and $A=\Theta(1)$.

For (iv), it is not hard to verify that $k^* = 0$ and $B=\Theta(1)$ for the co-cycle centered at $a_1$.
\end{ex}

\section{Parameterized Typical-Case Smoothed Complexity of \kr{}}
For any $n$-profile $P$, {\em the average \kt{} distance} is defined as
$$\avgkt(P) = \frac{1}{n(n-1)}\sum_{R_1,R_2\in P}\kt(R_1,R_2)$$ 
Following the idea in {\em probably polynomial smoothed complexity} of perceptron~\cite{Blum2002:Smoothed}, we prove a similar result on the parameterized typical-case smoothed complexity of \citet{Betzler09:Fixed}'s dynamic programming algorithm for \kr{}, denoted by $\alg_{\text{KS}}$.  
Given $\vec\mm_{\mallows}$, for convenience, sometimes we rewrite a parameter $\vec\theta\in \Theta_m^n$ as $(P,\vec\varphi)$, where $P\in\ml(\ma_m)^n$ is called the {\em central profile} and $\vec\varphi\in (0,1]^n$ is called the {\em dispersion vector}.

\begin{thm}[\bf Parameterized Typical-Case Smoothed Complexity of \kr{}] \label{thm:kemenyeasy} 
For any $m\ge 3$, $n\ge 1$, any central profile $P\in \ml(\ma_m)^n$, any dispersion vector $\vec\varphi \in (0,1]^n$, and any $t>0$, let $\varphi^* = \frac1n\sum_{j\le n}\min( m^2\varphi_j, \frac{m\varphi_j}{(1-\varphi_j)^2(1-\varphi_j^2)})$, let $d=\lceil\avgkt(P)+2\varphi^*+t\rceil$, and let $P'$ denote a random profile generated from $\mm_{\mallows,m}$ given $(P,\vec\varphi)$. With probability at least $1-\exp(-\frac{2nt^2}{m^2(m-1)^2})$, we have $\rt{\alg_{\text{KS}}}(P')=O(16^d(d^2n^2m^2\log m))$.
\end{thm}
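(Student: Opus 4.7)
The approach has four ingredients, combined as follows: invoke the FPT guarantee of $\alg_{\text{KS}}$ to reduce everything to controlling $\avgkt(P')$; use a triangle inequality to split $\avgkt(P')$ into its deterministic ``ground truth'' part $\avgkt(P)$ plus a perturbation $\frac{2}{n}\sum_{j}\kt(R'_j,R_j)$; bound the expectation of each $\kt(R'_j,R_j)$ using the repeated-insertion representation of Mallows; and conclude with McDiarmid's bounded-differences inequality.

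First, by~\cite{Betzler09:Fixed}, $\alg_{\text{KS}}$ has runtime $O(16^{\avgkt(P')}(\avgkt(P')^2 n^2 m^2\log m))$, which is monotone in $\avgkt(P')$. Hence it suffices to show
\[\Pr\!\left[\avgkt(P')\le \avgkt(P)+2\varphi^*+t\right]\ge 1-\exp\!\left(-\tfrac{2nt^2}{m^2(m-1)^2}\right),\]
since on that event the runtime is $O(16^d(d^2 n^2 m^2\log m))$. Second, applying the triangle inequality $\kt(R'_i,R'_j)\le \kt(R'_i,R_i)+\kt(R_i,R_j)+\kt(R_j,R'_j)$ to every ordered pair $i\ne j$ and dividing by $n(n-1)$ yields
\[\avgkt(P')\le \avgkt(P)+\tfrac{2}{n}\sum_{j=1}^n \kt(R'_j,R_j).\]

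Third, I will bound $\E[\kt(R'_j,R_j)]$ via the repeated-insertion representation of~\cite{Doignon2004:Repeated}: $\kt(R'_j,R_j)=\sum_{i=1}^{m-1}X_{i,j}$ with independent $X_{i,j}$ satisfying $\Pr[X_{i,j}=k]=\varphi_j^k/\!\sum_{\ell=0}^{i}\varphi_j^\ell$. Evaluating $\E[X_{i,j}]$ via the closed-form geometric sum and summing over $i$ gives the two companion bounds $\E[\kt(R'_j,R_j)]\le m^2\varphi_j$ (using the trivial per-step bound $\E[X_{i,j}]\le m\varphi_j$, which dominates in the high-$\varphi$ regime) and $\E[\kt(R'_j,R_j)]\le m\varphi_j/((1-\varphi_j)^2(1-\varphi_j^2))$ (from the closed form, which dominates in the small-$\varphi$ regime). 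Taking the minimum and averaging over $j$ gives $\E[\avgkt(P')]\le \avgkt(P)+2\varphi^*$.

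Fourth, the rankings $R'_1,\ldots,R'_n$ are independent under $\mm_{\mallows,m}$, so I apply McDiarmid to the function $f(R'_1,\ldots,R'_n)=\avgkt(P')$. Changing a single coordinate $R'_j$ alters exactly the $2(n-1)$ Kendall--Tau terms involving $R'_j$, each by at most $\binom{m}{2}$; dividing by $n(n-1)$ bounds the per-coordinate difference by $c_j=\frac{m(m-1)}{n}$. McDiarmid therefore yields
\[\Pr\!\left[\avgkt(P')-\E[\avgkt(P')]>t\right]\le \exp\!\left(-\tfrac{2t^2}{\sum_j c_j^2}\right)=\exp\!\left(-\tfrac{2nt^2}{m^2(m-1)^2}\right),\]
and combining with $\E[\avgkt(P')]\le \avgkt(P)+2\varphi^*$ closes the proof.

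The main obstacle is step three: deriving the precise constants in both companion bounds on $\E[\kt(R'_j,R_j)]$ — in particular verifying that the factor $1/((1-\varphi_j)^2(1-\varphi_j^2))$ (rather than the tighter $1/(1-\varphi_j)^2$ one might naively expect) is what comes out of carrying the residual $1-\varphi_j^{i+1}$ terms through the sum $\sum_{i=1}^{m-1}\E[X_{i,j}]$. All other steps are either routine or direct appeals to prior work.
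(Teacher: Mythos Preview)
Your proof is correct and follows the same four-step skeleton as the paper (Betzler's FPT bound $\to$ triangle inequality $\to$ expectation bound $\to$ concentration). There are two minor technical differences worth noting. First, for the expectation bound the paper works with Mallows' closed-form pairwise probabilities rather than the repeated-insertion decomposition; second, for concentration the paper applies Hoeffding directly to the sum $S_n=2\sum_j \kt(R'_j,R_j)$ (each summand in $[0,m(m-1)]$) rather than McDiarmid to $\avgkt(P')$, but the bounded-differences constant you compute and the resulting exponent coincide exactly. Regarding your stated obstacle: you do not need to reproduce the factor $1/((1-\varphi_j)^2(1-\varphi_j^2))$. Your RIM argument already gives the \emph{tighter} bound $\E[X_{i,j}]\le \sum_{k\ge 1}k\varphi_j^k=\varphi_j/(1-\varphi_j)^2$ (using that the denominator $\sum_{\ell=0}^{i}\varphi_j^\ell\ge 1$), hence $\E[\kt(R'_j,R_j)]\le (m-1)\varphi_j/(1-\varphi_j)^2$, which is strictly smaller than $m\varphi_j/((1-\varphi_j)^2(1-\varphi_j^2))$ and therefore implies the stated bound a fortiori. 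The extra $(1-\varphi_j^2)$ in the theorem is an artifact of the paper's pairwise-probability route, not a target you must hit.
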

\begin{proof} 

\citet{Betzler09:Fixed} proved that the runtime of $\alg_{\text{KS}}$ is $O(16^{\bar d}({\bar d}^2n^2m^2\log m))$, where $\bar d = \lceil \avgkt(P')\rceil$. Therefore, it suffices to prove that with high probability, the average \kt{} distance in $P'$. 


We first prove a claim on the expected \kt{} distance between the central ranking and randomly generated ranking.

\begin{claim}
\label{claim:expktsingle}
For any single-agent Mallows model and any $\theta = (R,\varphi)$, we have
$$\expect_{W\sim\theta}\kt(R,W) \le \min( m^2\varphi, \frac{m\varphi}{(1-\varphi)^2(1-\varphi^2)})$$
\end{claim}
The proof is done by directly calculating $\expect_{W\sim\theta}\kt(R,W)$ using \citet{Mallows57:Non-null}'s closed-form formulas for probability of pairwise comparisons.


\begin{claim}\label{claim:ktprob}
We have:
$$\Pr(\avgkt(P')>\avgkt(P)+2\varphi^*+t)\le \exp\left(-\frac{2nt^2}{m^2(m-1)^2}\right)$$
\end{claim}
\begin{proof}
For any pair of agents $1\le j_1<j_2\le n$, let $R_{j_1}'$ and $R_{j_2}'$ denote their rankings in $P'$, respectively. We have $\kt(R_{j_1}',R_{j_2}')\le \kt(R_{j_1},R_{j_2})+\kt(R_{j_1}',R_{j_1})+\kt(R_{j_2}',R_{j_2})$. Therefore, $\expect(\avgkt(P')) = \avgkt(P)+ \frac{(n-1)\sum_{j=1}^n \kt(R_{j}',R_{j})}{n(n-1)/2}= \avgkt(P)+ \frac{2\sum_{j=1}^n \kt(R_{j}',R_{j})}{n}$. Notice that for each $j\le n$, $\kt(R_{j}',R_{j})$ is a random variable  in  $[0,m(m-1)/2]$ whose mean is no more than $\min( m^2\varphi_j, \frac{m\varphi_j}{(1-\varphi_j)^2(1-\varphi_j^2)})$. Let $S_n =  2\sum_{j=1}^n \kt(R_{j}',R_{j})$. We have that $\expect(S_n)\le 2\varphi^*n$. Therefore, for any $t>0$, by Hoeffding's inequality, we have:
\begin{align*}
&\Pr(\frac{S_n}{n}>2\varphi^* + t)\le \Pr(\frac{S_n}{n}>\expect(\frac{S_n}{n}) + t)\\
\le  &\Pr(S_n>\expect(S_n) + nt)\le \exp\left(-\frac{2(nt)^2}{n(m(m-1))^2}\right)\\
=&\exp\left(-\frac{2nt^2}{m^2(m-1)^2}\right)
\end{align*}
\end{proof}
The theorem follows after Claim~\ref{claim:ktprob}.
\end{proof}
We note that $\varphi^*\le \frac{m^2}{n} \sum_{j=1}^n\varphi_j$ and $\varphi^*\le \frac m n\sum_{j=1}^n\frac{\varphi}{(1-\varphi)^2(1-\varphi^2)}$, and the former upper bound is the average dispersion of the agents multiplied by $m^2$. Neither upper bound implies the other. The former is stronger when some $\varphi_j$ is close to $1$. The latter is stronger when all $\varphi_j=O(\frac{1}{m})$. We immediately obtain the following corollary by combining the former upper bound with Theorem~\ref{thm:kemenyeasy}.

\begin{coro}\label{coro:ks}
Given $\vec\mm_{\mallows,m}$, any $n=\Omega(m^2(m-1)^2)$, and any parameter $(P,\vec\varphi)$ such that (1) $\avgkt(P)=O(\log m +\log n)$, and (2) $\frac{\vec\varphi\cdot\vec 1}{n}=O(\frac{\log m +\log n}{m^2})$, we have:
$$\Pr_{P'\sim (P,\vec\varphi)}(\rt{\alg_\text{KR}}(P') = \omega(\text{poly}(mn)^t)) =\exp(-\Omega(t^2))$$
\end{coro}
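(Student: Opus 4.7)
The plan is to obtain the corollary as a direct consequence of Theorem~\ref{thm:kemenyeasy}, by plugging in the two asymptotic assumptions on $\avgkt(P)$ and $\frac{\vec\varphi\cdot\vec 1}{n}$ and using the first of the two upper bounds on $\varphi^*$ stated in the remark immediately preceding the corollary, namely $\varphi^* \le \frac{m^2}{n}\sum_{j=1}^n \varphi_j$.

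First I would bound $\varphi^*$. Combining $\varphi^* \le \frac{m^2}{n}\sum_{j=1}^n\varphi_j$ with assumption (2) of the corollary yields $\varphi^* = O(\log m + \log n) = O(\log(mn))$. Together with assumption (1) that $\avgkt(P) = O(\log(mn))$, this shows that the quantity $d = \lceil\avgkt(P)+2\varphi^*+t\rceil$ appearing in Theorem~\ref{thm:kemenyeasy} satisfies $d \le c(\log(mn)+t)$ for some absolute constant $c$ (depending only on the hidden constants in assumptions (1) and (2)).

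Next I would substitute this bound into the runtime guarantee $\rt{\alg_{\text{KS}}}(P') = O(16^d\,d^2 n^2 m^2 \log m)$ of Theorem~\ref{thm:kemenyeasy}. Since $16^d \le 16^{c}\cdot(mn)^{4c}\cdot 16^{ct}$ and the remaining factor $d^2 n^2 m^2 \log m$ is at most $\text{poly}(m,n,t)$, the runtime is bounded by $\text{poly}(mn,t)\cdot 16^{ct}$, which is $O(q(mn)^t)$ for a fixed polynomial $q$ (any $q$ satisfying, e.g., $q(mn) \ge 16^c\cdot(mn)^{4c+6}$ when $mn$ is sufficiently large suffices). Hence the event $\{\rt{\alg_{\text{KR}}}(P') = \omega(\text{poly}(mn)^t)\}$ is contained in the failure event of Theorem~\ref{thm:kemenyeasy} for this choice of $d$.

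Finally, the failure probability given by Theorem~\ref{thm:kemenyeasy} is $\exp\!\left(-\frac{2nt^2}{m^2(m-1)^2}\right)$, which under the assumption $n = \Omega(m^2(m-1)^2)$ simplifies to $\exp(-\Omega(t^2))$, giving the claimed tail bound. Because every step is a substitution, I do not anticipate any serious obstacle; the only care required is in verifying the elementary calculation that $16^d\cdot\text{poly}(d,m,n)$ is dominated by $\text{poly}(mn)^t$ for a suitable fixed polynomial, so that the runtime bound from Theorem~\ref{thm:kemenyeasy} correctly translates into the $\omega(\text{poly}(mn)^t)$ language of the corollary.
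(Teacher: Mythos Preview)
Your proposal is correct and follows essentially the same approach as the paper, which simply states that the corollary is obtained ``immediately'' by combining Theorem~\ref{thm:kemenyeasy} with the first upper bound $\varphi^*\le \frac{m^2}{n}\sum_{j=1}^n\varphi_j$. You have just spelled out in detail the substitutions that the paper leaves implicit.
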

Corollary~\ref{coro:ks} states that when $n$ is sufficiently large, and the average \kt{} distance in $P$   and the average dispersion are not too large, with high probability $\alg_\text{KR}$ solves \kr{} in polynomial time.

\section{Summary and Future Work} 
We show that the smoothed complexity framework by~\citet{Blaser2015:Smoothed} may not be appropriate for computational social choice. We prove the smoothed hardness of Kemeny and Slater, and a parameterized typical-case smoothed easiness result for Kemeny. An immediate open question is the smoothed complexity of the Dodgson rule and the Young rule. Smoothed complexity of other problems in computational social choice is also an obvious open question as~\citet{Baumeister2020:Towards} pointed out.
\newpage
\section*{Broader Impact}
This paper aims to understand the smoothed complexity of computing commonly-studied voting rules, which are important tools for collective decision making. The results will be important to multi-agent systems, where voting is used to achieve consensus. Success of the research will benefit general public beyond the CS research community because voting is a key component of democracy.

\bibliography{references}
\newpage

\section{Appendix: Plackett-Luce-Based Models}
\label{app:pl}

\begin{ex}
\label{ex:singlepl}
As another example, in the {\em single-agent Plackett-Luce model} $\mm_{\pl,m}$, $\Theta_m= \{\vec \theta\in (0,1]^m: \vec \theta\cdot \vec 1 = 1\}$. For any $\vec \theta\in\Theta_m$ and any $R=\sigma(a_1)\succ \sigma(a_2)\succ\cdots\succ\sigma(a_m)$, we have $\pi_{\vec\theta} (R) = \prod_{i=1}^{m-1}\frac{\theta_{\sigma(a_i)}}{\sum_{l=i}^m \theta_{\sigma(a_l)}}$. For any $0< \underline\theta\le \overline\theta\le 1$, we let $\mm_{\pl,m}^{[\underline\theta,\overline\theta]}$ denote the Plackett-Luce model where $\Theta_m= \{\vec \theta\in [\underline\theta,\overline\theta]^m: \vec \theta\cdot \vec 1 = 1\}$. 

It is not hard to verify that for any $0< \underline\theta < \overline\theta\le 1$, $\mm_{\pl,m}^{[\underline\theta,\overline\theta]}$ is neutral and P-samplable by using its random utility interpretation~\cite{Xia2019:Learning}.
\end{ex}

Let $\vec \mm_{\pl}^{[\underline\theta,\overline\theta]} = \{\mm_{\pl,m}^{[\underline\theta,\overline\theta]}: m\ge 3\}$.  The following example shows that $\vec \mm_{\pl}^{[\underline\theta,\overline\theta]}$ satisfies Assumption~\ref{asmpt:main} and (iv) as $\vec \mm_{\mallows}^{[\underline\theta,\overline\theta]}$ does.

\begin{ex}
\label{ex:assumption-hold-pl}We show that for any $\underline\theta<\overline\theta$, $\vec \mm_{\pl}^{[\underline\theta,\overline\theta]}$ satisfies Assumption~\ref{asmpt:main} and (iv). (i) and (ii) have been discussed in Example~\ref{ex:singlepl}. We show that (iii) and (iv) hold for $\vec\theta= (\theta_1,\ldots,\theta_m)$, where  $\theta_1 = \overline\theta$, $\theta_2 =\frac{\underline\theta+ \overline\theta}{2}$,  and for all $3\le i\le m$, $\theta_i = \underline\theta$.    

For (iii), let $\gcyc = a_1\ra a_2\ra a_3\ra a_1$. The weights on the three edges and its orthogonal decomposition are shown in Figure~\ref{fig:expl}. (iii) follows after noticing that the $a_1\ra a_2\ra a_3$ has non-zero weight.

\begin{figure}[htp]
\centering
\includegraphics[width=.5\textwidth]{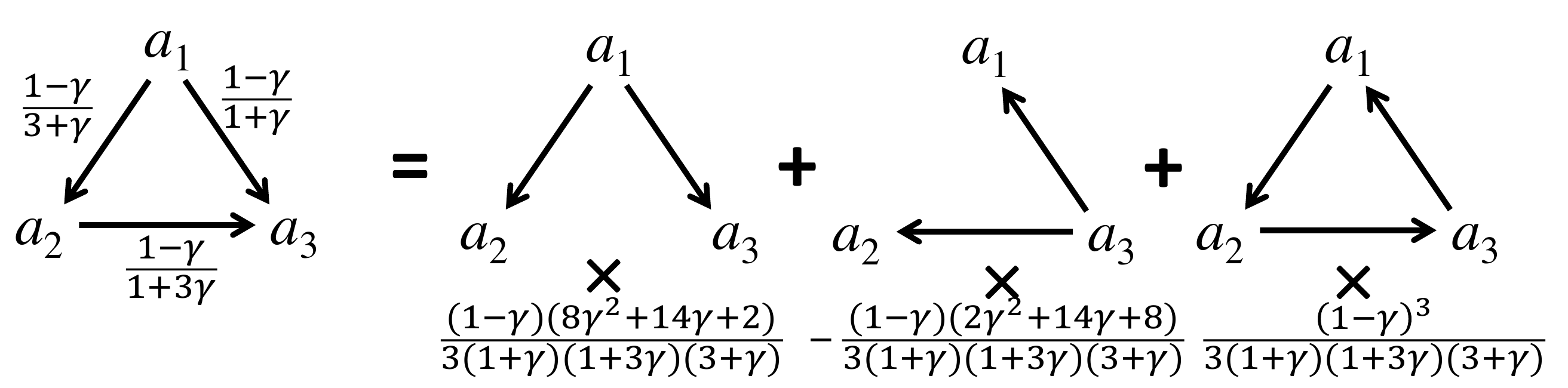} 
\caption{$\wmg(\vec\theta)$ restricted on $\gcyc$, where $\gamma = \frac{\overline \theta}{\underline\theta}$.
\label{fig:expl}}
\end{figure}

For (iv), it is not hard to verify that $k^* = 0$ and $B=\Theta(1)$ for the co-cycle centered at $a_1$ because for any $b\ne a_1$, the weight on $a_1\ra b$ is strictly positive.
\end{ex}

\section{Appendix: Proofs}
\label{app:proofs}

\appClaim{claim:qtheta}{
$|Q^\Theta_\gcyc| = O(m)$, $Q^\Theta_\gcyc$ consists of $O(m)$ types of parameters. $\wmg(Q^\Theta_\gcyc)$ consists of the following two types of edges.
(1) There are three edges $a_1\ra a_2$, $a_2\ra a_3$, $a_3\ra a_1$, each has weight $2(m-3)\alpha$, where $\alpha = \wmg(\picyc)\cdot \gcyc$.
(2) There are edges from $\{a_1,a_2,a_3\}$ to $\{a_4,\ldots,a_m\}$ whose weights are $\beta = 2\sum_{(d_1,d_2)\in \{a_1,a_2,a_3\}\times \{a_4,\ldots,a_m\}} w_{\pi_{3c}}(d_1,d_2)$.
}
\begin{proof}
    The claim follows after the definition. 
    The combination of three permutations  $\{\sigma_1^i :1\le i\le 3\}$ on $\picyc$ contributes $\alpha = \wmg(\picyc)\cdot \gcyc = w_{\picyc}(a_1,a_2)+w_{\picyc}(a_2,a_3) +w_{\picyc}(a_3,a_1)$ to each of $a_1\ra a_2$, $a_2\ra a_3$ and $a_3\ra a_1$'s weights. 
    The claim of (1) follows by considering all $6(m-3)$ permutations.
    Similarly, the combination of $2(m-3)$ permutations $\{\sigma_1^i \circ \sigma_2^t,\sigma_1^i \circ \sigma_2^{-t} :1\le t \le m-3\}$ on $\picyc$ contributes $2\sum_{d_2\in\{a_4,\cdots,a_m\}}w_{\picyc}(a_i,d_2)$ to each edge from $\{a_1,a_2,a_3\}$ to $\{a_4,\cdots,a_m\}$ for every $1 \le i \le 3$.  
    The claim of (2) follows by summation over $1 \le i \le 3$.
    Note that edges between alternatives in $\{a_4,\cdots,a_m\}$ have zero weight, since $\sigma_2^{-t}$ cancels the weight contributed by $\sigma_2^t$ for every $1 \le t \le m-3$.
\end{proof}

\appClaim{claim:patheta} {$|P_a^\Theta|=2(m-1)$ and 
$P_a^\Theta$ consists of $2(m-1)$ types of parameters. $\wmg(P^\Theta_a)$ is a co-cycle whose center is $a$ and the absolute weight on any non-zero edge is $2\sum_{(d_1,d_2)\in \{a\}\times {\ma \backslash \{a\}}}w_{\pi_{3c}}(d_1,d_2)$.
}
\begin{proof}
    It is easy to verify that the claim holds for each edge $a\ra d(d_\in \ma\backslash\{a\})$. 
    Denote the $m-1$ alternatives in $\ma \backslash \{a\}$ as  $\{a_{i_1},a_{i_2},\cdots,a_{i_{m-1}} \}$  such that $i_1\le i_2 \le \cdots \le i_{m-1}$.
    Let $r(t,i) := (t+i-1\mod (m-1)) +1, \forall 1\le i,t \le m-1$. 
    Note that $\sigma_a = a \ra a_{i_1} \ra a_{i_2} \ra \cdots \ra a_{i_{m-1}}$ and $\sigma_a \circ \eta = a \ra a_{i_{m-1}} \ra a_{i_{m-2}} \ra \cdots a_{i_1}$. Therefore, for each edge $a_{i_s}\ra a_{i_t}$ that does not contain $a$, the weight contributed by the combination of $m-1$ permutations $\{\sigma_a^{i}: 1\le i \le m-1\}$ is $$\sum_{i=1}^{m-1}w_{\wmg(\picyc)}(a_{i_{r(s,i)}},a_{i_{r(t,i)}}),$$
    while the weight contributed by the combination of $m-1$ permutations
    $\{\sigma_a^{i}\circ \eta: 1\le i \le m-1\}$ is
    \begin{align*}
        &\sum_{i=1}^{m-1}w_{\wmg(\picyc)}(a_{i_{r(m-s,i)}},a_{i_{r(m-t,i)}}) \\
         =&\sum_{i=1}^{m-1}w_{\wmg(\picyc)}(a_{i_{r(m-s,i)}},a_{i_{r(m-s+(s-t),i)}})\\
        =&\sum_{i=1}^{m-1}w_{\wmg(\picyc)}(a_{i_{r(t,i)}},a_{i_{r(t+(s-t),i)}})\\
         =&\sum_{i=1}^{m-1}w_{\wmg(\picyc)}(a_{i_{r(t,i)}},a_{i_{r(s,i)}})\\
         =&-\sum_{i=1}^{m-1}w_{\wmg(\picyc)}(a_{i_{r(s,i)}},a_{i_{r(t,i)}})
    \end{align*}
    Therefore, the weight contributed by the two combinations of permutations cancel each other, and we have $w_{\wmg(P_a^{\Theta})}(a_{i_s},a_{i_t}) = 0$.
\end{proof}

\appClaim{claim:cocycle}{$|P_1^\Theta|=O(m^2)$ and $P_1^\Theta$ consists of $O(m^2)$ types of parameters. $\wmg(P^\Theta_1)$ is a co-cycle whose center is $a_1$ and the absolute weight on any non-zero edge is $2(m-3)\beta$.
}
\begin{proof}
By claim \ref{claim:qtheta}, $|Q_{\gcyc}^{\Theta}| = O(m)$ and $Q_{\gcyc}^{\Theta}$ consists of $O(m)$ types of parameters.
Following definition \ref{dfn:op} of $O_a$, it is easy to see that $|P_1^\Theta|=O(m^2)$ and $P_1^\Theta$ consists of $O(m^2)$ types of parameters. 
By claim \ref{claim:patheta}, $\wmg(P_1^{\Theta})$ is a co-cycle and the weight on any non-zero edge $a_1\ra d(d \in \ma \backslash\{a\})$ is 
\begin{align*}
    &2\sum_{(d_1,d_2) \in \{a_1\}\times \ma \backslash \{a_1\} } w_{\wmg(\picyc)}(d_1,d_2) \\
    =& 2(w_{\wmg(\picyc)}(a_1,a_2) + w_{\wmg(\picyc)}(a_1,a_3)) + 2(m-3)\beta\\
    =& 2(m-3)\beta.
\end{align*}
The second equality follows from $w_{\wmg(\picyc)}(a_1,a_2) = -w_{\wmg(\picyc)}(a_1,a_3)$.
\end{proof}
    
\appClaim{claim:3cycle}{
$|P^\Theta_\gcyc|  = O(m^{k})$ and  
$P^\Theta_\gcyc$ consists of $O(m^3)$ different types of parameters. $\wmg(P^\Theta_\gcyc)=\gcyc$. 
}
\begin{proof}
By Claim~\ref{claim:qtheta}, $|Q^\Theta_\gcyc| = O(m)$. By Claim~\ref{claim:cocycle}, $|P^\Theta_1|=O(m^2)$. Therefore, $|P^\Theta_\gcyc| = \frac{1}{2(m-3)\alpha} O(m)+ \frac{1}{4(m-3)^2\alpha} O(m^2)= O(m^{k})$. $Q^\Theta_\gcyc$ consists of no more than $6(m-3)$ types of parameters. Each $\sigma_{a_1\leftrightarrow d}(P^\Theta_1)$ consists of $O(m^2)$ types of parameters. Therefore, the total number of parameters is no more than $O(m^3)$. 
It is not hard to verify that $\wmg(P^\Theta_\gcyc)=\gcyc$.
\end{proof}

\appThm{thm:slaterhard}{Smoothed Hardness of \slaterranking{}}{ For any series of single-agent preference models $\vec\mm$ that satisfies Assumption~\ref{asmpt:main} and (iv), if there exists a smoothed poly-time algorithm for \slaterranking{}  w.r.t.~$\vec\mm$, then {\sc NP}$=${\sc RP}.}

\begin{proof}
The high-level idea of the proof is similar to the proof of Theorem~\ref{thm:kemenyhard}. The difference is that in this proof we use the {\sc Tournament Feedback Arc Set (TFAS)} problem, which is NP-hard~\cite{Alon06:Ranking,Conitzer06:Slater}. Since it is unknown whether {\sc  Eulerian Tournament Feedback Arc Set} is NP-hard, in a \tfas{} instance $(G,t)$ it is possible that $G$ is not Eulerian. Therefore, we need a co-cycle component to construct a parameter profile whose WMG is $G$. Moreover, the weight on the co-cycle component cannot be too small, otherwise the construction will not be polynomial. Condition (iv) is used to guarantee the existence of a desirable co-cycle component as described. We conjecture that {\sc  Eulerian Tournament Feedback Arc Set} is NP-hard, and if this is true, then condition (iv) can be removed.

Formally, the theorem is proved by contradiction. Suppose for the sake of contradiction that there exists a smoothed polynomial-time algorithm $\alg_{sr}$ for \slaterranking{}. 
An instance of TFAS is denoted by $(G,t)$, where $t \in \mathbb{N}$ and $G$ is a directed unweighted tournament graph over $\ma_m$, which means that for each pair of nodes $a,b$, one of $a\ra b$  and $b\ra a$ is in $G$. We are asked to decide whether $G$ can be made acyclic by removing no more than $t$ edges.

The high-level idea of the proof is, in essence, the same as the proof of theorem \ref{thm:kemenyhard}. Formally, the proof proceeds in the following four steps. In step 1, we use permutations of $\picyc$, which is guaranteed by Assumption \ref{asmpt:main} to construct a parameter profile whose WMG is a 3-cycle whose edges have the same weights. In step 2, we use permutations of $\pico$, which is guaranteed by (iv), to construct a parameter profile whose WMG is a co-cycle. The reduction from TFAS will be presented in Step 3, where we use components of Step 1 and 2 to construct the graph. In step 4 we show that $\alg_{sr}$ can be used to prove that TFAS is in RP.

\noindent{\bf Step 1.~Construct a cycle of length 3.}
The construction is exactly the same as Step 1 in the proof of theorem \ref{thm:kemenyhard}. Let $P_{\gcyc}^{\Theta}$ be the fractional parameter profile constructed using $\theta_{3c}$. By claim \ref{claim:3cycle}, $P_{\gcyc}^{\Theta}$ consists of no more than $O(m^3)$ different types of parameters and $\wmg(P_{\gcyc}^{\Theta})$ is a 3-cycle where the weight on any non-zero edge is 1.

\noindent{\bf Step 2.~Construct a co-cycle.} 
Recall Definition \ref{dfn:op} for  $\op_a$, let $P_{a}^{\Theta} = \op_a(\theta_{co})$, where $\theta_{co}$ is the parameter corresponding to $\pico$. Similar to Claim~\ref{claim:cocycle}, the following claim characterizes $|P_a^{\Theta}|$ and $\wmg(P_a^{\Theta})$.
\begin{claim}\label{claim:slater-cocycle}
$P_a^\Theta$ consists of $2(m-1)$ parameters, each has weight $1$. $\wmg(P^\Theta_a)$ is a co-cycle whose center is $a$ and the weight on any non-zero edge is $2\gamma$, where $\gamma = \wmg(\pico)\cdot \gco$.
\end{claim}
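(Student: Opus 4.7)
The plan is to mimic the proof of Claim~\ref{claim:patheta} almost verbatim, substituting $\theta_{co}$ for $\theta_{3c}$ throughout and tracking how the co-cycle component $\gco$ (rather than the 3-cycle component $\gcyc$) of the underlying WMG governs the final edge weights. Since $\op_a$ contains exactly $2(m-1)$ permutations and each is applied once to $\theta_{co}$, the parameter profile $P_a^\Theta$ contains $2(m-1)$ parameters, each of weight $1$; invoking neutrality (Assumption~\ref{asmpt:main} (ii)) I may assume without loss of generality that the co-cycle component $\gco$ guaranteed by condition (iv) is centered at the alternative $a$ under consideration, so that $\gamma = \wmg(\pico)\cdot\gco = \sum_{b\ne a} w_{\pico}(a,b)$.

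For the structural part, I would split the edges of $\wmg(P_a^\Theta)$ into two groups. For an edge $a\to d$ with $d\ne a$, each $\sigma\in\op_a$ fixes $a$, so $\sigma(\pico)$ contributes $w_{\pico}(a,\sigma^{-1}(d))$ to this edge. Because $\{\sigma_a^i:1\le i\le m-1\}$ acts regularly on $\ma_m\setminus\{a\}$, and likewise for $\{\sigma_a^i\circ\eta_a:1\le i\le m-1\}$, summing over all of $\op_a$ yields $2\sum_{b\ne a} w_{\pico}(a,b) = 2\gamma$, independently of $d$. For an edge $d_1\to d_2$ with $d_1,d_2\ne a$, I would replicate the cancellation calculation from Claim~\ref{claim:patheta}: the contribution from $\sigma_a^i$ at position $(d_1,d_2)$ is the negative of the contribution from $\sigma_a^i\circ\eta_a$ at the same position, because $\eta_a$ is an order-reversing involution on $\ma_m\setminus\{a\}$ that flips the relative order of any pair. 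The two halves of $\op_a$ therefore cancel on this edge and it receives weight $0$, completing the verification that $\wmg(P_a^\Theta)$ is a co-cycle centered at $a$ whose non-zero edges all carry weight $2\gamma$.

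The main obstacle I expect is the cancellation argument for the non-$a$ edges, which requires unwinding the precise action of $\sigma_a^i$ and $\sigma_a^i\circ\eta_a$ on an arbitrary ordered pair $(d_1,d_2)$ and matching it up with the reversed pair; fortunately the explicit index-manipulation computation inside the proof of Claim~\ref{claim:patheta} already carries it out, and the substitution of $\pico$ for $\picyc$ does not affect the validity of that algebra since the identity used there concerns only the behavior of $\eta_a$ and the antisymmetry of WMG weights. A minor secondary subtlety is justifying the neutrality-based realignment so that $\gco$ is centered at $a$, but this is immediate from Assumption~\ref{asmpt:main} (ii).
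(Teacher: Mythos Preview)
Your proposal is correct and essentially matches the paper's own treatment: the paper gives no explicit proof of this claim, merely noting it is ``similar to Claim~\ref{claim:cocycle}'' (the more direct analogue is really Claim~\ref{claim:patheta}, as you observe, since here $\op_a$ is applied to a single parameter $\theta_{co}$ rather than to a profile). Your neutrality realignment of $\gco$ to be centered at $a$ is a point the paper leaves implicit but which is indeed needed for the stated weight $2\gamma$ to be independent of $a$; you are being more careful than the source here.
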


\noindent{\bf Step 3.~Use $P_a^{\Theta}$ and $P_{\gcyc}^{\Theta}$ as building blocks to efficiently construct any WMG.}
We first show that any edge can be constructed by polynomially many (in $m$) unweighted cycles and unweighted co-cycles.

\begin{claim}\label{claim:oneedge}Any edge with weight $m$ can be obtained from the union of $m-2$ unweighted cycles and $m$ unweighted co-cycles.
\end{claim}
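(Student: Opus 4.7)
By the symmetry of cycles and co-cycles under relabeling of alternatives, I may assume the target edge is $a_1\to a_2$, so the claim is to express the WMG with weight $m$ on $a_1\to a_2$ and weight $0$ elsewhere as a sum of $m-2$ unweighted cycles and $m$ unweighted co-cycles.

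The plan is to exhibit the following explicit construction: (i) for each $j\in\{3,\ldots,m\}$, include the unweighted $3$-cycle $C_j = a_1\to a_2\to a_j\to a_1$, giving $m-2$ cycles; (ii) include two copies of the co-cycle centered at $a_1$; and (iii) for each $j\in\{3,\ldots,m\}$, include one copy of the co-cycle centered at $a_j$. This totals $2+(m-2)=m$ co-cycles, matching the claim.

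Then I would verify the net weight on each edge. On $a_1\to a_2$: each of the $m-2$ cycles contributes $+1$ and each of the two co-cycles at $a_1$ contributes $+1$, for a total of $m$. On $a_1\to a_j$ with $j\geq 3$: the cycle $C_j$ contributes $-1$ (its oriented edge is $a_j\to a_1$), the co-cycle at $a_j$ contributes $-1$, and the two co-cycles at $a_1$ contribute $+2$, netting $0$. On $a_2\to a_j$ with $j\geq 3$: $C_j$ contributes $+1$ and the co-cycle at $a_j$ contributes $-1$, netting $0$. On edges between $a_i,a_j$ with $3\leq i<j$: the co-cycles at $a_i$ and $a_j$ contribute $+1$ and $-1$ in the $a_i\to a_j$ direction, netting $0$, and no other block touches such edges. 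A small case check at $m=3$ and $m=4$ agrees.

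The main obstacle is calibrating the multiplicities so that every non-target edge cancels exactly. The nontrivial point is why two co-cycles at $a_1$ are needed rather than one: both $C_j$ and the co-cycle at $a_j$ each contribute $-1$ to the edge $a_1\to a_j$ for $j\geq 3$, so a single co-cycle at $a_1$ would leave a net $-1$ on those edges. Two co-cycles at $a_1$ simultaneously cancel these contributions and boost $a_1\to a_2$ from $m-2$ up to $m$, which is also why the co-cycle total lands at $m$ rather than $m-1$.
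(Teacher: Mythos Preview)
Your construction is correct and is essentially identical to the paper's own proof: the paper builds the edge $a_2\to a_1$ of weight $m$ from the $3$-cycles $a_1\to a_i\to a_2\to a_1$ for $3\le i\le m$ together with two co-cycles centered at $a_2$ and one co-cycle centered at each $a_i$ for $i\ge 3$, which is your construction after swapping the labels $a_1\leftrightarrow a_2$. Your edge-by-edge verification is more detailed than the paper's (which simply states the construction and asserts it is easy to check), and your remark on why two co-cycles are needed at the source vertex is a nice addition.
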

\begin{proof} Following the definition of the orthogonal decomposition and the fact that the $\mV_\text{cyc}$ and $3$-cycles and co-cycles discussed, any edge with weight $m$ can be obtained from unions of cycles and co-cycles. However it is unclear whether polynomially many cycles and co-cycles suffices, which is crucial later in the proof. 

The proof is done by explicitly constructing the cycles and co-cycles. It is not hard to verify that the edge $a_2\ra a_1$ with weight $m$ is the union of the following $3$-cycles and  co-cycles.
\begin{itemize}
\item {\bf Cycles.} For each $3\le i\le m$, there is a $3$-cycle $a_1\ra a_i\ra a_2\ra a_1$.
\item {\bf Co-cycles.} For each $3\le i\le m$, there is a co-cycle centered at $a_i$. There are two co-cycles centered at $a_2$.
\end{itemize}
An example for $m=4$ is shown in Figure~\ref{fig:construction}.
\begin{figure}[htp]
\centering
\includegraphics[width=.5\textwidth]{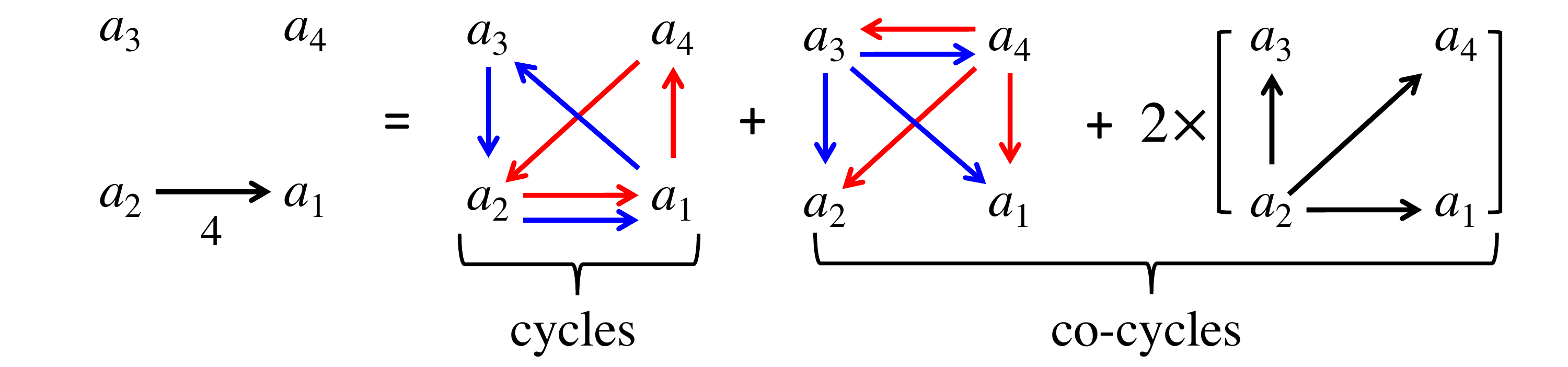}
\caption{Construction of $4$ copies of $a_2\rightarrow a_1$.\label{fig:construction}}
\end{figure}
\end{proof}

For any edge $b\ra c$ with weight $1$, let $P_{b\ra c}^\Theta$ denote the (fractional) parameter profile corresponding to the cycles and co-cycles used in the proof of Claim~\ref{claim:oneedge}. We have the following claim about $P_{b\ra c}^\Theta$.
\begin{claim}\label{claim:slater-edge}
Let $k_{\max}=\max\{k,k^*\}$. For any edge $b \ra c$, $|P^\Theta_{b\ra c}| = O(m^{k_{\max}+2})$. $|P^\Theta_{b\ra c}|$ consists of no more than $O(m^4)$ different types of parameters. $\wmg(P^\Theta_{b\ra c})$ contains a single edge $b \ra c$ with weight $m$.
\end{claim}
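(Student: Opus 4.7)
The plan is to combine three ingredients already assembled: the unit-3-cycle gadget $P_\gcyc^\Theta$ from Claim~\ref{claim:3cycle}, the co-cycle gadget $P_a^\Theta$ from Claim~\ref{claim:slater-cocycle}, and the explicit decomposition of a weight-$m$ edge into $(m-2)$ unit 3-cycles and $m$ unit co-cycles supplied by Claim~\ref{claim:oneedge}. For an arbitrary target edge $b\ra c$, neutrality (Assumption~\ref{asmpt:main}(ii)) lets me relabel any gadget so that its WMG sits on the triple or vertex prescribed by the decomposition.

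For each of the $(m-2)$ unit 3-cycles in the decomposition, I take a permuted copy of $P_\gcyc^\Theta$. By Claim~\ref{claim:3cycle}, each such piece has total weight $O(m^{k})$ and contributes $O(m^3)$ distinct parameter types. For each of the $m$ unit co-cycles, I take a scaled permuted copy of $P_a^\Theta$: by Claim~\ref{claim:slater-cocycle} the non-zero entries of $\wmg(P_a^\Theta)$ have absolute value $2\gamma$, where $\gamma = \wmg(\pico)\cdot \gco$; Condition~(iv) gives $\gamma > B/m^{k^*}$, so scaling by $1/(2\gamma) = O(m^{k^*})$ produces a unit co-cycle of total weight $2(m-1)/(2\gamma) = O(m^{k^*+1})$ while still using only $2(m-1)$ distinct parameter types. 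I then define $P_{b\ra c}^\Theta$ as the fractional union of all these pieces.

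Summing the two contributions, the total weight is $(m-2)\cdot O(m^{k}) + m\cdot O(m^{k^*+1}) = O(m^{k+1}) + O(m^{k^*+2}) = O(m^{k_{\max}+2})$, and the total number of distinct parameter types is $(m-2)\cdot O(m^3) + m\cdot 2(m-1) = O(m^4)$. Because the WMG operator is additive on unions of (fractional) parameter profiles, $\wmg(P_{b\ra c}^\Theta)$ equals the sum of the component WMGs, which by Claim~\ref{claim:oneedge} is exactly the single edge $b\ra c$ with weight $m$.

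The main obstacle is the accounting of the scaling factors rather than any new structural idea: Condition~(iv) is needed precisely so that $1/\gamma$ is polynomially bounded, and one must carefully separate the two conversion costs $O(m^{k})$ (for 3-cycles, from Claim~\ref{claim:3cycle}) and $O(m^{k^*+1})$ (for co-cycles, from the scaling above), showing that whichever dominates is absorbed into $O(m^{k_{\max}+2})$. If either Claim~\ref{claim:3cycle} or Condition~(iv) were weaker, the resulting profile could have super-polynomial size and the reduction from \tfas{} used later in Theorem~\ref{thm:slaterhard} would fail to run in polynomial time.
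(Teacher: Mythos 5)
Your proposal is correct and follows essentially the same route as the paper: it uses the decomposition of a weight-$m$ edge into $(m-2)$ unit 3-cycles and $m$ unit co-cycles from Claim~\ref{claim:oneedge}, builds the 3-cycle pieces from $P^\Theta_\gcyc$ (Claim~\ref{claim:3cycle}) and the co-cycle pieces from scaled copies of $P_a^\Theta$ (Claim~\ref{claim:slater-cocycle}, with $1/(2\gamma)=O(m^{k^*})$ via condition (iv)), and performs the same weight and parameter-type accounting, $(m-2)\cdot O(m^{k})+m\cdot O(m^{k^*+1})=O(m^{k_{\max}+2})$ and $(m-2)\cdot O(m^3)+m\cdot 2(m-1)=O(m^4)$. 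Your explicit mention of the scaling by $1/(2\gamma)$ and of neutrality for relabeling merely makes visible steps the paper leaves implicit.
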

\begin{proof}
By Claim \ref{claim:slater-cocycle}, the $m$ unweighted co-cycles contribute no more than $m\frac{1}{2\gamma}(2(m-1))$ to $|P^\Theta_{b\ra c}|$. 
By Claim \ref{claim:3cycle}, the $(m-2)$ 3-cycles contribute no more than $(m-2)\cdot O(m^k) = O(m^{k+1})$ to $|P^\Theta_{b\ra c}|$. Therefore, $|P^\Theta_{b\ra c}| \le m\frac{1}{2\gamma}(2(m-1))+ O(m^{k+1})= O(m^{k^*+2})+O(m^{k+1}) = O(m^{k_{\max}+2})$. 

Similarly, the number of different types of parameters in $|P^\Theta_{b\ra c}|$ is no more than $m\times2(m-1) + (m-2)\times O(m^3) = O(m^4)$.
\end{proof}

By Claim \ref{claim:slater-edge}, each unweighted majority graph can be represented by the $\wmg$ of a fractional parameter profile of no more than $O(m^6)\times {m\choose 2} = O(m^6)$ types of parameters, whose total weight is no more than $O(m^{k_{\max}+2})$, formally defined as follows. 
\begin{dfn}
For any $\umg$ $G$, let $P^\Theta_{G} =\bigcup_{a\ra b \in \umg} P_{b\ra c}^\Theta$. 
\end{dfn}
\noindent{\bf Step 4.~Use the smoothed polynomial-time algorithm to solve {\sc TFAS}.} 
Let $K = 2k_{\max} + 7 > 6$. Given an {\sc TFAS} instance $(G,t)$, where $G$ is an unweighted directed tournament graph, we first define an integral profile of parameters $P^{\Theta*}_{G}$ of $n = \Theta(m^K)$ parameters that is approximately $\frac{m^K}{|P^\Theta_{G}|}$ duplicates of $P^\Theta_G$ with error no more than $O(m^6)$. More precisely, let
$$
P_G^{\Theta*} = \lfloor P_G^\Theta \cdot \dfrac{m^{K}}{|P_G^{\Theta}|}\rfloor.
$$
Let $n = |P^{\Theta*}_G|$. Because the number of different types of parameters in $P_G^{\Theta*}$ is $O(m^6)$, we have $n = m^{K}-O(m^6)$, $\|\wmg(P_G^{\Theta*}) - \wmg(P_G^{\Theta} \cdot \frac{m^{K}}{|P_G^{\Theta}|}) \|  = O(m^6)$, and $\|\wmg(P_G^{\Theta*}) - G \cdot \frac{m^{K}}{|P_G^{\Theta}|}) \|  = O(m^6)$.

We now prove that running the smoothed polynomial-time algorithm $\alg_{sr}$ on instance generated by $P_G^{\Theta*}$ will give us a correct answer to {\sc TFAS} with high probability. The idea is similar to Step 3 in the proof of Theorem~\ref{thm:kemenyhard}. We will  simulate the preference profile generated under $P_G^{\Theta*}$: for each parameter $\theta'$ in $P_G^{\Theta*}$, we independently sample a linear order $R_{\theta'}$ from $\mm$ given $\theta'$. Let $P'$ denote the resulted preference profile, which is a random variable. The next claim shows that with exponentially small probability $\umg(P')$ is different from $G$.
\begin{claim}\label{claim:slater-concentration}
$\Pr[\umg(P') \ne G] < \exp\{-\Omega(m)\}$.
\end{claim}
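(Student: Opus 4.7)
The plan is to adapt the per-edge Hoeffding argument of Claim~\ref{claim:concentration} from the Kemeny proof to the present tournament setting. First I would fix an ordered pair $(a,b)$ of alternatives and write $w_{P'}(a,b) = \sum_{j=1}^{n} X_j$, where each $X_j \in \{-1,+1\}$ records the sign of the pairwise comparison between $a$ and $b$ in the ranking sampled independently from the $j$-th parameter of $P_G^{\Theta*}$. The $X_j$ are independent bounded random variables, so linearity of expectation gives $\expect[w_{P'}(a,b)] = w_{\wmg(P_G^{\Theta*})}(a,b)$, and by the $L_\infty$ bound recorded just after the definition of $P_G^{\Theta*}$, this mean differs from the scaled target weight proportional to $w_G(a,b)$ by at most $O(m^6)$.

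Second, I would apply Hoeffding's inequality to bound $\Pr[\,|w_{P'}(a,b) - \expect[w_{P'}(a,b)]| > T\,] \le \exp(-T^2/(2n))$ and pick $T$ to be a small constant fraction of the scaled signal, with $T^2/n = \Omega(m)$ so that the right-hand side becomes $\exp(-\Omega(m))$. The critical verification is that the signal strength (of order $\frac{m^{K+1}}{|P_G^\Theta|}$, using the per-edge weight $m$ produced by $P^\Theta_{b\ra c}$ in Claim~\ref{claim:slater-edge}) simultaneously dominates both the $O(m^6)$ rounding error and the chosen $T$; with $K = 2k_{\max}+7$ and $|P_G^\Theta| = O(m^{k_{\max}+4})$, this is the direct analog of the ``$K > 9$'' balancing used in Step~3 of the proof of Theorem~\ref{thm:kemenyhard}. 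Whenever this concentration event occurs, $\text{sign}(w_{P'}(a,b))$ agrees with the sign of $w_G(a,b)$. A union bound over the $\binom{m}{2}$ ordered pairs then yields sign agreement on every edge with probability $1 - O(m^2)\exp(-\Omega(m)) = 1 - \exp(-\Omega(m))$, and because $G$ is a tournament it is fully determined by its edge orientations, so $\Pr[\umg(P') = G] \ge 1 - \exp(-\Omega(m))$.

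The main obstacle I expect is the exponent bookkeeping rather than any new technical idea. One has to track that $P^\Theta_{b\ra c}$ contributes weight $m$ (rather than $1$) to each edge, that $|P_G^\Theta| = O(m^{k_{\max}+4})$ rather than the smaller quantity in the Kemeny reduction, and that the $O(m^6)$ distinct parameter types inflate the rounding error from the Kemeny analysis; one then has to verify that $K = 2k_{\max}+7$ absorbs every one of these effects simultaneously. If the margins turn out too tight for small $k_{\max}$, enlarging $K$ by a constant fixes the inequalities without affecting the rest of the RP reduction in Step~4.
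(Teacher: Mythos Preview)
Your proposal is correct and follows essentially the same approach as the paper: fix an edge, write $w_{P'}(a,b)$ as a sum of independent $\pm 1$ variables, apply Hoeffding so that the deviation exceeds the signal only with probability $\exp(-\Omega(m))$, and take a union bound over the $\binom{m}{2}$ pairs to conclude sign agreement and hence $\umg(P')=G$. If anything you are slightly more careful than the paper, which writes $\expect[w_{P'}(b,c)] = m\cdot\frac{m^{K}}{|P_G^{\Theta}|}\cdot w_G(b,c)$ as an exact equality and does not separately track the $O(m^6)$ rounding error you flag; your remark that any slack can be absorbed by enlarging $K$ by a constant is exactly the right fix.
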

\begin{proof}
We first show that for each pairwise comparison $b$ vs.~$c$, $\Pr[|w_{P'}(b,c)-\E[w_{P'}(b,c)]| >E[w_{P'}(b,c)]] < \exp\{-\Omega(m)\}$, then apply union bound to all pairwise comparisons. Notice that $w_{P'}(b,c)$ can be viewed as the sum of $n$ independent (but may not necessarily identical) bounded random variables, each of which corresponds to the pairwise comparison between $b$ and $c$---if $b\succ c$ in the linear order then the random variable takes $1$, otherwise the random variable takes $-1$. By Hoeffding's inequality for bounded random variables, we have: 
\begin{align*}
&\Pr[|w_{P'}(b,c)-\E[w_{P'}(b,c)]| >E[w_{P'}(b,c)]] \\
< &2\exp\{-\frac{(\E[w_{P'}(b,c)])^2}{2n}\}\\
= &2\exp\{-\frac{m^{2(K-k_{\max}-2+1)}}{2m^k}\}
=\exp\{-\Omega(m)\}
\end{align*}
Also notice that $\E[w_{P'}(b,c)] = m\cdot\frac{m^{K}}{|P^{\Theta|}_G}\cdot w_{G}(b,c)$ where $w_{G}(b,c)\in \{1,-1\}$.  Thus with high probability, $w_{\umg(P')}(b,c) = w_{G}(b,c)$. The claim follows after applying the union bound to all $\binom{m}{2} = \Theta(m^2)$ pairwise comparisons.
\end{proof}
To see that  {\sc TFAS} is in {\sc RP}, we can run Algorithm~\ref{alg:rpalg} with $\alg = \text{Alg}_{\text{sr}}$. More precisely, we run algorithm $\text{Alg}_{\text{sr}}$ for three times of its expected runtime, and if it is not finished yet, return NO. Otherwise, we check whether $\text{Alg}_{\text{sr}}(P')$ gives a solution to $(G,t)$. The rest of the proof is similar to the proof of Theorem~\ref{thm:kemenyhard}.
\end{proof}

\appClaim{claim:expktsingle}{
For any single-agent Mallows model and any $\theta = (R,\varphi)$, we have
$$\expect_{W\sim\theta}\kt(R,W) \le \min( m^2\varphi, \frac{m\varphi}{(1-\varphi)^2(1-\varphi^2)})$$
}
\begin{proof}
According to the calculation by~\citet{Mallows57:Non-null}, for any pair of alternatives $(a,b)$ ranked at the $i_1$-th$<i_2$-th positions in the central ranking, respectively, $\Pr_{W\sim\theta}(a\succ_W b) = \frac{i_2-i_1+1}{1-\varphi^{i_2-i_1+1}} - \frac{i_2-i_1}{1-\varphi^{i_2-i_1}}$, which is at least $\frac{1}{1+\varphi}$. Therefore, $\expect_{W\sim\theta}\kt(R,W) \le \frac{m^2}{1+\varphi} = \sum_{a\succ_R b}\Pr_{W\sim\theta}(b\succ_W a) \le \frac{m^2\varphi}{1+\varphi}\le m^2\varphi$.

The $\frac{m\varphi}{(1-\varphi)^2(1-\varphi^2)}$ part needs a more careful calculation. For each pair of alternatives whose positions in $R$ are $k$ away, we have 
\begin{align*}
&\Pr_{W\sim\theta}(b\succ_W a) = 1-\frac{k+1}{1-\varphi^{k+1}}+\frac{k}{1-\varphi^{k}}\\
=& \frac{k\varphi^k-(k+1)\varphi^{k+1}+\varphi^{2k+1}}{(1-\varphi^k)(1-\varphi^{k+1})}\\
<& \frac{k\varphi^k-(k+1)\varphi^{k+1}+\varphi^{2k+1}}{(1-\varphi)^2}
\end{align*}
Summing up the probability for all pairs, we have
\begin{align*}
&\expect_{W\sim\theta}\kt(R,W)\\
<& (m-1)\sum_{k=1}^{m-1}\frac{k\varphi^k-(k+1)\varphi^{k+1}+\varphi^{2k+1}}{(1-\varphi)^2}
\\
=& \frac{m-1}{(1-\varphi)^2}\left(\sum_{k=1}^{m}\varphi^{2k-1}-m\varphi^m\right)\\
=& \frac{(m-1)\varphi}{(1-\varphi)^2}\left(\sum_{k=1}^{\lfloor m/2\rfloor} \varphi^{2(k-1)}(1-\varphi^{m+1-2k})^2 \right)\\
\le& \frac{(m-1)\varphi}{(1-\varphi)^2(1-\varphi^2)} <\frac{m\varphi}{(1-\varphi)^2(1-\varphi^2)} 
\end{align*}
\end{proof}

\end{document}